\DeclarePairedDelimiterXPP\bigo[1]{\mathcal{O}}{(}{)}{}{#1}
\DeclarePairedDelimiterXPP\bigomega[1]{\Omega}{(}{)}{}{#1}
\DeclarePairedDelimiterXPP\bigtheta[1]{\Theta}{(}{)}{}{#1}
\crefname{section}{Sec.}{Secs.}
\DeclareMathOperator{\poly}{poly}
\newcommand{\QuICS}{
Joint Center for Quantum Information and Computer Science, NIST/University of Maryland, College Park, Maryland 20742, USA}
\newcommand{\JQI}{
Joint Quantum Institute, NIST/University of Maryland, College Park, Maryland 20742, USA}
\newcommand{\UMDCS}{Department of Computer Science and Institute for Advanced Computer Studies,
University of Maryland, College Park, Maryland 20742, USA}
\newcommand{\Harvard}{Department of Physics, Harvard University, Cambridge, Massachusetts 02138, USA}
\begin{document}
\title{Optimal Routing Protocols for Reconfigurable Atom Arrays}
\date{\today}
\author{Nathan~Constantinides}
\email{nconstan@umd.edu}
\affiliation{\JQI}
\affiliation{\QuICS}
\author{Ali~Fahimniya}
\affiliation{\JQI}
\affiliation{\QuICS}
\author{Dhruv~Devulapalli}
\affiliation{\JQI}
\affiliation{\QuICS}
\author{Dolev~Bluvstein}
\affiliation{\Harvard}
\author{Michael~J.~Gullans}
\affiliation{\QuICS}
\affiliation{\UMDCS}
\author{J.~V.~Porto}
\affiliation{\JQI}
\author{Andrew~M.~Childs}
\affiliation{\QuICS}
\affiliation{\UMDCS}
\author{Alexey~V.~Gorshkov}
\affiliation{\JQI}
\affiliation{\QuICS}

\begin{abstract}
    Neutral atom arrays have emerged as a promising platform for both analog and digital quantum processing. Recently, devices capable of reconfiguring arrays during quantum processes have enabled new applications for these systems. Atom reconfiguration, or \emph{routing}, is the core mechanism for programming circuits; optimizing this routing can increase processing speeds, reduce decoherence, and enable efficient implementations of highly non-local connections. In this work, we investigate routing models applicable to state-of-the-art neutral atom systems. With routing steps that can operate on multiple atoms in parallel, we prove that current designs require %
    $\bigomega{\sqrt N \log N}$ steps 
    to perform certain permutations on 2D arrays with $N$ atoms and provide a protocol that achieves routing in %
    $\bigo{\sqrt N \log N}$ steps for any permutation. We also propose a simple experimental upgrade and show that it would reduce the routing cost to $\bigtheta{\log N}$
    steps. 
\end{abstract}
\maketitle

\textit{Introduction.---}Arrays of optically trapped neutral atoms have demonstrated impressive capabilities as a platform for quantum simulation and quantum information processing~\cite{bluvstein_controlling_2021,cong_enhancing_2024,semeghini_probing_2021,ebadi_quantum_2021,ebadi_quantum_2022,hashizume_deterministic_2021,graham_multiqubit_2022,radnaev_universal_2024,browaeys_manybody_2020}. 
In contrast to architectures with fixed connectivity~\cite{bruzewicz_trappedion_2019,kjaergaard_superconducting_2020}, atom arrays can be dynamically reconfigured mid-computation, enabling parallelization and fast implementation of 
computations and primitives for fault tolerance
~\cite{beugnon_twodimensional_2007,bluvstein_quantum_2022,bluvstein_logical_2024,xu_constantoverhead_2024}. %
Crucial to the success of these applications is their fast and efficient circuit synthesis using native interactions allowed by the hardware.  Recent works have investigated compilers tailored to the unique capabilities of neutral atom arrays~\cite{wang_qpilot_2024,tan_compiling_2024,patel_geyser_2022a,patel_graphine_2023,ludmir_parallax_2024,tan_compilation_2024_enola}. %
In current designs, two-qubit gates are implemented using a globally illuminating laser that simultaneously couples pairs of atoms through Rydberg excitations if they are transported within the Rydberg interaction radius $R_b$. Compilers make use of three steps: implementing gates using the Rydberg interactions, mapping qubits to atoms, and  routing qubits (i.e., reconfiguring atoms). 
Fast implementation of the atom reconfiguration step enables a more efficient implementation of parallel two-qubit gates between arbitrary pairs of atoms. %

\begin{figure}[b]
    \includegraphics[width=1\linewidth]{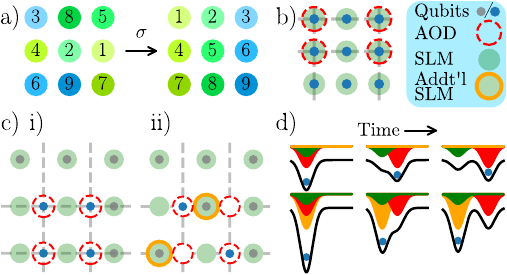}
    \caption{\textbf{(a)} An example of an instance of a routing problem on $N=9$ qubits in a 2D lattice. \textbf{(b)} Schematic of atoms (smaller blue dots) in a combination of SLM-generated traps (light green disks) and AOD-generated traps (dashed red circles). %
    \textbf{(c)} Schematic of two scenarios for selecting atoms for transfer: \textbf{(i)} Grid transfer, where atoms trapped in a static SLM array are selected and transferred by an a AOD-defined grid. \textbf{(ii)} Selective transfer, where atoms trapped in a static SLM array are (de)selected  by a second SLM and thus excluded from being transferred to an AOD-defined array. Locations of the deselected sites are marked by the orange circles. \textbf{(d)} Time-series of SLM, AOD, and additional SLM trap potentials (green, red, and orange curves, respectively), as well as the total trap potential (black line) for selected (upper row) and deselected (lower row) sites during a selective transfer. The atom, following the potential minimum, moves to AOD trap on selected sites but stays in SLM trap otherwise.} %
    \label{fig:main}
\end{figure}

Atom reconfiguration is a generalization of quantum routing, or performing a permutation $\sigma\colon S \to S$ on the set of $N$ qubits labeled with $S = \{0, \dots, N-1\}$. Quantum routing has been well-studied in architectures with connectivity constraints described by a static coupling graph~\cite{cowtan_qubit_2019,childs_circuit_2019,bapat_quantum_2021,yuan_full_2024}. However, the coupling graph model does not represent the connectivity of reconfigurable atom arrays as a single time-step can involve gates or swaps between qubits that were previously far away and have been transported close to each other during the step. Therefore, there is a need for new routing models that apply to the specific capabilities of these platforms.

In this paper, we investigate routing models inspired by experiments with one- and two-dimensional arrays of reconfigurable neutral atoms. Figure~\ref{fig:main}(a) depicts routing on a 2D rectangular array that performs a permutation $\sigma$ and rearranges the atoms (circles) into a desired order. All of our models are motivated by experiments in which the atoms are trapped in optical tweezers generated by a stationary Spatial Light Modulator (SLM) and, during each step, are rearranged using movable Acousto-Optic Deflector (AOD) optical tweezers. In Fig.~\ref{fig:main}(b), the green circles represent the SLM traps that hold the atoms (blue dots). AOD traps (red dashed circles) %
form on a grid of intersecting AOD rows and columns (gray dashed lines). %
The positions of the rows and columns can be continuously adjusted, with the constraint that they do not cross. Atoms trapped at the intersections can be coherently transported within this dynamically adjustable grid that maintains site order.

Two possibilities for the transfer of a subset of atoms in an SLM array to and from a rectangular grid of AOD traps are shown in Fig.~\ref{fig:main}(c). In a \emph{grid transfer} [Fig.~\ref{fig:main}(c)(i)], all the atoms in the SLM traps next to the AOD grid are transferred to the AOD. This is how AODs are utilized in current designs. In contrast, we propose \emph{selective transfers} [Fig.~\ref{fig:main}(c)(ii)], with which one can choose to transfer only a subset of atoms from such SLM traps to a subset of the AOD grid. 

We prove lower bounds for routing on 2D arrays using either grid or selective transfers to load atoms between the SLM and AOD traps. We define each single step of routing to be a rearrangement of atoms in the SLM grid with a constant number of transfers between the SLM and a movable AOD. Each such step can involve parallel, simultaneous operations on multiple atoms. With this framework, we show that routing with grid transfers on an array with $N$ atoms requires $\bigomega{\sqrt{N} \log{N}}$ steps for most permutations. For routing with selective transfers, however, the corresponding lower bound is only $\bigomega{\log N}$. Additionally, we provide routing protocols that saturate these bounds up to constant factors, achieving routing with grid transfers and selective transfers in $\bigo{\sqrt{N}\log N}$ and $\bigo{\log N}$ steps, respectively. Therefore, selective transfers significantly speed up routing on 2D arrays. For the case of routing with grid transfers, we provide a protocol that performs any sparse permutation---one with at most $\poly(\log N)$ non-identity elements per row (or column)---in $\poly(\log N)$ steps, improving the $\bigo{\sqrt N \log N}$ result, which holds for a generic permutation.
\textit{Experimental details.---}Neutral atom platforms typically employ two types of optical tweezers: SLMs that have the capability to generate arbitrary 2D arrays of traps~\cite{bergamini_holographic_2004,barredo_atomatom_2016,barredo_synthetic_2018} by modulating the amplitude or phase of an incident beam, but do not currently have the ability to transport atoms during the execution of a quantum computation due to their slow update rate; and AODs that are formed at the intersection of two sets of rows and columns and therefore, are constrained to grids of traps \cite{barredo_atomatom_2016, bluvstein_quantum_2022}. %
These rows and columns may be steered independently, thus deforming their rectangular grid, but no two columns or rows may cross to prevent atom collisions or interference between the radio frequency tones driving the device. %

Both approaches we consider for selecting and transferring atoms from a stationary SLM to a movable AOD are experimentally feasible.
For the grid transfer, which is shown in  Fig.\ \ref{fig:main}(c)(i) and is the current state-of-the-art~\cite{bluvstein_logical_2024}, %
the AOD trap potentials are turned on at the location of the selected SLM sites, and are made deeper than the SLM traps. 
The AOD traps are then moved away, and the selected atoms, following the lowest potential, transfer from SLM to AOD. %
For the selective transfer, shown in Fig.\ \ref{fig:main}(c)(ii) and (d), we consider a similar approach to grid transfer, with the use of an additional SLM array to select and ``deactivate" arbitrary sites from transferring into the movable AOD array. 
The additional SLM trap potentials are turned on at the grid sites that are to be excluded from the transfer, and are made deeper than both the primary SLM and AOD traps.  This ``deselects'' atoms in these SLM traps from moving with the AOD traps when the AOD traps %
are moved away while the rest of the atoms of the grid are transferred to the AOD traps.

For each routing model, we assume atoms are placed in an SLM array. We then define single routing steps that can be done with an AOD array. Every single step consists of a constant number of pickups and drop-offs as well as constant potential ramp-ups and ramp-downs of AOD rows and columns and takes the atoms back to the original SLM array. We then investigate the number of single steps needed to implement different permutations of the atoms and determine the maximum number of single steps for the implementation of any permutation. This maximum number is the model-specific \emph{routing number}, which captures the worst-case time to perform a permutation in that model [see Def.~\ref{supp-def:routingnumber} in the Supplemental Material (SM)~\footnote[1]{See Supplemental Material for more details about our models, formal definitions, detailed proofs of theorems, as well as algorithms for hypercube routing and sparse routing}].

\textit{1D routing.---}Some recently performed and proposed neutral atom experiments involve operations on atoms stored in a one-dimensional chain of static traps, or require routing of 1D chains of atoms as an experimental subroutine~\cite{xu_constantoverhead_2024, hashizume_deterministic_2021}. Therefore, before moving to 2D, we consider the problem of permuting $N$ atoms arranged in a 1D chain of static SLM traps. We define a single routing step to be an in-order swap of two equal-sized subarrays of the 1D chain (see Def.~\ref{supp-def:1dSwapStep} in SM~\footnotemark[1]). An example of a possible implementation of a single routing step in this model is shown in Fig.~\ref{fig:1DSwapDesc}. A single step is given by two disjoint ascending ordered sequences $A = \{a_i\}$ and $B = \{b_i\}$ (with $|A| = |B|$) and swaps each $a_i$ with $b_i$ while leaving the other atoms in place. This single step involves four transfers of atoms between traps, and is implemented purely via motion and AOD-SLM tweezer transfers. The two subarrays $A$ and $B$ (orange and blue atoms in Fig.~\ref{fig:1DSwapDesc}, respectively), are swapped by picking up $A$ and replacing each atom in $B$ pairwise in order [Fig.~\ref{fig:1DSwapDesc}(a-b)]. The vacant SLM traps left by $A$ are likewise replaced by those atoms from $B$ [Fig.~\ref{fig:1DSwapDesc}(b-c)]. $A$ and $B$ are constrained to remain in order to prevent crossing of AOD tones and atom collision. %

\begin{figure}
    \includegraphics[width=0.95\linewidth]{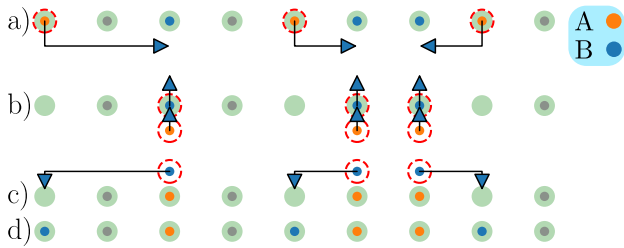}
    \caption{Step-by-step visualization of an in-order swap of two subsets of atoms of a 1D chain. Orange and blue dots are the two subsets involved in the swap, and gray dots are the rest of the atoms. Green circles are the SLM traps that hold the atoms before and after the swap. Red dashed circles are AOD traps that are used to transport and swap the atoms. %
    \textbf{(a)} First, the first subset of atoms involved in the swap, $A$, are picked up from SLM traps by the AOD traps and are transported next to the second subset of atoms, $B$. \textbf{(b)} Then, using an additional row of AOD traps, $A$ ($B$) atoms are moved into (out of) the SLM traps. \textbf{(c)} Next, $B$ atoms are transported back to the original SLM traps occupied by $A$ atoms. \textbf{(d)} The final arrangement of the atoms after the swap.
    }
    \label{fig:1DSwapDesc}
\end{figure}

\textcite{xu_constantoverhead_2024}\ also describe a model of routing on 1D chains of atoms in which one generates $N/2$ extra empty SLM traps in addition to those $N$ storing atoms. In this model, %
each step utilizes only a constant number of AOD-SLM transfers and AOD motion. The model's single step permutation is a riffle shuffle: an interleaving of two unique rising sequences, analogous to interleaving two parts of a deck of cards while shuffling them. %
With this single step, they show that a divide-and-conquer algorithm can implement any permutation in $\log_2 N$ steps. %

In analogy to the algorithm for routing with riffle shuffles, one can implement any permutation in $\log_2 N$ ordered swaps. Labeling the atoms by their destinations, one can transport the atoms $\{0, \dots, N/2-1\}$ to the left $N/2$ traps of the array in a single step by simply swapping the sets of qubits on the left and right halves of the array that are targeting the opposite side. Then, one can continue to route smaller sets of atoms in parallel in each partition until the permutation is completed (see Sec.~\ref{supp-sec:1d_protocols} in SM~\footnotemark[1]). %

The routing numbers for both riffle shuffle and ordered swap models of 1D routing are lower bounded by $\bigomega{\log N}$.
We show this (see Sec.~\ref{supp-sec:1d_bounds} in SM~\footnotemark[1]) by considering a general permutation $\sigma$ and a set $\mathcal R$ of $k$ atoms whose relative order is reversed by $\sigma$.
The proof follows from the fact that, in any riffle shuffle, all the atoms of the array are partitioned into two sets that retain their relative order. One of these partitions is guaranteed to include at least half of those atoms in $\mathcal R$. Therefore, after each routing step, at least $k/2$ atoms from $\mathcal R$ remain in reverse order. %
This implies that it takes at least $\log_2 k$ steps to route $\sigma$. For the reversal permutation $\sigma_r(i) = N - i - 1$, all atoms are reversed, so the worst case routing number is lower bounded by $\bigomega{\log N}$. When considering routing with ordered swaps, the proof is analogous.%

\begin{figure*}
    \centering
    \includegraphics[width=0.9\linewidth]{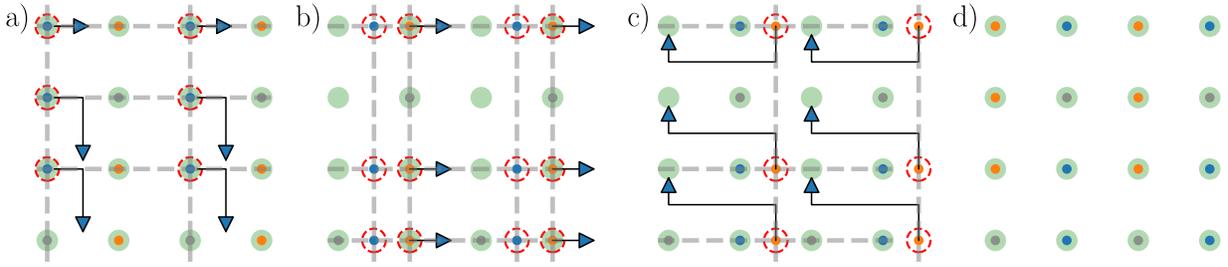}
    \caption{Step-by-step visualization of a swap of two combinatorial rectangles of a 2D array. See the caption in Fig.~\ref{fig:1DSwapDesc} for descriptions of symbols in the figure and of each panel.}
    \label{fig:2dswap}
\end{figure*}

\textit{2D routing.---}For 2D atom arrays, we identify natural single-step permutations for the cases of routing with grid and selective transfers. In the absence of selective transfers, we take any single routing step to be an in-order swap of two \emph{combinatorial rectangles} $R_1$ and $R_2$ of the same dimension (see Defs.~\ref{supp-def:rect},~\ref{supp-def:2DRectSwaps} in SM~\footnotemark[1]). A combinatorial rectangle $R$ %
is an array of points formed by the intersection of two sets of rows and columns $A$ and $B$, with $R = A \times B = \{(i,j) \mid i \in A,\, j \in B\}$. Its dimension is $(|A|, |B|)$. Each combinatorial rectangle is ordered lexicographically first by row and then by column, with the $i$th element of a rectangle denoted $[R]_i$. For example, $( (1, 1), (1, 2), (3, 1), (3, 2) )$ is a dimension $(2, 2)$ rectangle. A single routing step with grid transfers is the permutation $\sigma$ that swaps, for all $i$, the $i$th atoms of each rectangle, $[R_1]_i$ and $[R_2]_i$, and leaves all other atoms in place. The requirement that rectangles may be swapped only in order stems from the column and row non-crossing requirements of AODs. Figure~\ref{fig:2dswap} depicts the implementation of such a swap via atom motion.
It could also be implemented by a combination of Rydberg gates between the pairs of neighboring blue and orange atoms in Fig.~\ref{fig:2dswap}(b) and single-qubit gates before returning the blue atoms to their original SLM traps.

With selective transfers, a single routing step is a \emph{masked} in-order swap of two combinatorial rectangles (see Def.~\ref{supp-def:2DArbPickStep} in SM~\footnotemark[1]). That is, for two combinatorial rectangles $R_1$ and $R_2$, and some masking function %
$\mathcal M \colon \mathbb{Z} \to \{0, 1\}$, any two atoms $[R_1]_i$ and $[R_2]_i$ with $\mathcal M(i) = 1$ are swapped, and all other atoms are left in place. This single step could be implemented analogously to that seen in Fig.~\ref{fig:2dswap}(a-c), with the AOD-SLM transfers becoming selective transfers [see Fig.\ \ref{fig:main}(c)(ii)] to swap only the masked part of each rectangle of blue and orange atoms. %

Lower bounds on routing in both 2D models (i.e., with grid transfers and selective transfers) follow from a counting argument (see Sec.~\ref{supp-sec:2d_bounds} in SM~\footnotemark[1]). The number of single-step permutations for $N$ atoms on an $\sqrt N \times \sqrt N$ grid is $2^{O(N)}$ and $2^{O(\sqrt N)}$ with grid and selective transfers, respectively, while the number of permutations is $N! = 2^{\Theta(N \log N)}$. Since each permutation must have a unique routing schedule, the worst-case permutation requires $\Omega(\log N)$ steps with selective transfers, or $\Omega(\sqrt N \log N)$ steps with grid transfers.

We now present routing protocols for both 2D models matching their lower bounds up to constant %
factors (see Sec.~\ref{supp-sec:2d_protocols} in SM~\footnotemark[1]). These protocols borrow from the theory of routing on a coupling graph, and take advantage of a natural embedding of the $d$-dimensional hypercube graph $Q_d$ into the 2D grid \cite{bluvstein_logical_2024}. The vertices of $Q_d$ are $V = \{0,1\}^d$ (the set of all binary strings of length $d$), and its edges are $E = \{(i,j) \in V \times V \mid H(i, j) = 1\}$, where $H(i,j)$ is the binary Hamming distance, or number of bits where $i$ and $j$ differ.

The 2D embedding maps each of $N = 2^d$ atoms to a vertex of the hypercube $Q_d$ and positions it in 2D space based on its binary address. The atoms are arranged on a $\sqrt{2N} \times \sqrt{N / 2}$ grid for $d$ odd, or a $\sqrt{N} \times \sqrt{N}$ grid for $d$ even, shown in an example for $d=4$ in Fig.~\ref{fig:hcembedding}. Each vertex is mapped to the corresponding atom on the grid at coordinate $(i,j)$, with row $i$ and column $j$ the first $\lceil d / 2 \rceil$ and last $\lfloor d /2 \rfloor$ bits of its binary address, respectively. For example, the hypercube vertex ${\color{red} 10}{\color{blue}{01}}$ is mapped to coordinate $({\color{red}10}, {\color{blue}01})$ (or $(2, 1)$ in decimal).

With this embedding, one can choose a bit position $1 \leq k \leq d$, and with selective transfers swap in a single routing step any set of pairs of atoms $(i,j)$ whose binary addresses differ only on the $k$th bit. This is possible because the set $A$ of atoms whose $k$th bit is $0$, and set $B$ of atoms whose $k$th bit is $1$ each correspond to sets of atoms that are equal-dimension combinatorial rectangles in the hypercube embedding. Additionally, each $i \in A$, $j \in B$ that differ only on the $k$th bit also maintain the same lexicographic order in their respective rectangle, so they will be swapped with each other if an in-order swap of $A$ and $B$ is performed. By choosing an appropriate masking function $\mathcal M$, the single selective transfers routing step given by $A$, $B$, and $\mathcal M$ implements the swap of any set of pairs of atoms differing only on the $k$th bit. 

\begin{figure}[b]
    \centering
    \includegraphics[width=220pt]{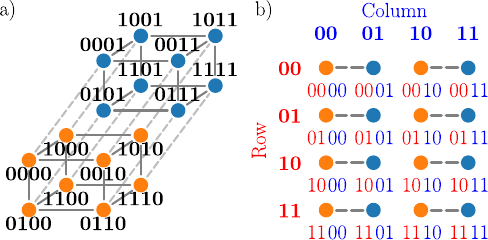}
    \caption{\textbf{(a)} The 4-dimensional hypercube graph $Q_4$. Vertices are given by orange and blue circles and are labeled by the binary address. For orange vertices the last bit is set to $0$ (e.g., $1110$), and for blue the last is set to $1$ (e.g., $0001$). Edges are given by solid or dashed gray lines. Dashed edges connect vertices that differ on the last bit. \textbf{(b)} The 2D embedding of $Q_d$ for $d = 4$. Atoms (orange and blue circles) are labeled by the corresponding graph vertex. The graph is embedded into a rectangular grid of $N=2^d$ atoms. %
    Each atom is assigned to a vertex of $Q_4$ by concatenating its binary row and column coordinates, shown by the red and blue labels. Similarly to (a), orange and blue atoms correspond to addresses with the last bit set to $0$ and $1$, respectively. Edges of the hypercube connecting atoms that differ on the last bit are depicted by gray dashed lines. Both the sets of orange and blue atoms form combinatorial rectangles. Therefore, with selective transfers, any set of pairs of orange and blue atoms connected by dashed lines may be swapped in a single step. In fact, any set of pairs of atoms whose addresses differ only on some $k$th bit can be swapped in a single step.} %
    \label{fig:hcembedding}
\end{figure}

This primitive operation is used to implement an optimal routing algorithm for selective transfers. Reference~\cite{alon_routing_1993} describes an algorithm for routing on the hypercube. 
We apply this algorithm to the problem of routing on neutral atoms with the hypercube embedding. The output of the algorithm on $Q_d$ can be scheduled so that it divides routing into a sequence of $2d - 1$ steps, where each step performs swaps across sets of edges $(i, j)$ of $Q_d$ where the addresses $i,j$ differ only on some $k$th bit. Since these are single routing steps with selective transfers, and $d=\log_2 N$, this realizes a routing procedure for selective transfers taking $2 \log_2 N - 1$ routing steps.

The output of this algorithm is converted to an optimal algorithm for routing without selective transfers. Each selective transfer routing step, a masked swap of combinatorial rectangles, can be converted into $\bigo{\sqrt N}$ swaps of combinatorial rectangles without a mask by simply performing one swap for each row of the masked rectangles. One may optimally decompose each masked rectangle into a sum of disjoint rectangles as well~\cite{tan_depthoptimal_2024}, but in the worst case this still requires $\bigomega{\sqrt N}$ rectangles, so it does not improve the worst-case performance of the algorithm. This algorithm for routing with grid transfers takes at most $\sqrt N (2 \log_2 N - 1)$ routing steps.

\textit{Sparse routing.---}We have shown a significant separation between the cost of 2D routing with and without selective transfers. This routing advantage does not hold for a class of sparse permutations, defined as follows: A permutation $\sigma_s$ is column(row)-sparse if $\sigma_s$ is non-identity on at most $\poly(\log N)$ elements per column (row) (see Sec.~\ref{supp-sec:sparse} in SM~\footnotemark[1]). Any such row-sparse or column-sparse permutation can be routed in $\poly (\log N)$ steps without selective transfers through an iterative sparse routing procedure. In each iteration for a column-sparse permutation, the procedure chooses two sets of atoms $A$ and $B$, such that $B$ represents the destination location of each atom in $A$. It uses a greedy selection process to ensure that each atom in $A$ and $B$ is in a unique column in its respective set, and that $|A| = |B| = \bigomega{\sqrt N}$. Each iteration performs a swap of each $i \in A$ with its destination $\sigma(i) \in B$, while leaving other atoms undisturbed. It does so by performing a compression step, which recursively swaps the atoms $A$ downwards to their own row, and then the atoms $B$ to an adjacent row. This step disturbs the positions of qubits not in $A$ or $B$. After this, 1D routing is performed on the $A$ row to align each $i \in A$ to its partner $\sigma(i) \in B$ in the same column, and the two rows are swapped. Finally, the compression procedure is inverted, thus completing the swap of each $i \in A$, $\sigma(i) \in B$, and returning all other atoms to their original positions, completing one iteration. The procedure is similar for a row-sparse permutation.

Each compression and its reverse take at most $\log_2 \sqrt N$ steps, as does the 1D routing. In each iteration, $\Omega(\sqrt N)$ atoms are routed to their destinations, and any sparse permutation contains at most $\bigo{\sqrt N \poly(\log \sqrt N)}$ qubits, so $\poly(\log \sqrt N)$ iterations are performed for a total of $\poly(\log N)$ routing steps.

\textit{Discussion and outlook.---}We established a polynomial bound for routing on 2D arrays in state-of-the-art reconfigurable neutral atom array designs. Conversely, we showed that a simple experimental upgrade, \emph{selective transfer} between SLM and AOD grids, reduces the lower bound to $\bigomega{\log N}$ steps, yielding a significant speedup. Using a hypercube embedding for the qubits, we provided protocols that perform any permutation in $\bigo{\log N}$ and $\bigo{\sqrt N \log N}$ steps for routing with and without selective transfers, respectively, saturating the lower bounds.

All three stages of circuit compilation---scheduling Rydberg-based two-qubit gates, mapping qubits to atoms, and routing qubits---must be optimized together to obtain the most efficient implementation.
Compilers such as \cite{tan_compiling_2024} make use of satisfiability modulo theories (SMT) solvers to find optimal solutions to these steps, but since SMT problems are generally NP-hard, they are prohibitively expensive for practical use. Alternatively, compilers such as \cite{tan_compilation_2024_enola, wang_qpilot_2024, wang_fpqac_2023} make use of heuristic solutions, which may be sub-optimal but can be obtained more efficiently. In this work, we have shown lower bounds on the time taken to perform the routing step, as well as an optimal protocol for worst-case permutations, based on routing on the hypercube. However, this protocol may not perform well for all permutations. Our sparse routing protocol may be a useful heuristic routing protocol that applies to a broad class of permutations and routes efficiently. Future research can explore how these various methods can be combined to maximize performance and solver scalability.

Although our results are presented for rectangular arrays, they are relevant to any array that can be considered a sub-array of a bigger rectangular one with only $\bigo{N}$ additional sites, such as hexagonal, triangular, or Kagome lattices. One can reduce the routing problem on the non-rectangular sub-array $A$ to a routing problem on the rectangular array $R$ where the permutation on $R$ is the same as the one on $A$ for those sites and arbitrary for $R\setminus A$ sites. Thus, all our bounds apply to such arrays as well.

We proved the lower bound for routing with grid transfers by a counting argument. It could be illuminating to find a constructive proof for this statement that showcases properties of a permutation that make its implementation harder rather than only showing that hard permutations exist.

\begin{acknowledgments}

\textit{Acknowledgments.---}We thank Daniel Bochen Tan and Juntian Tu for useful discussions. 
A.F., D.D., M.J.G., J.V.P., A.M.C.,~and A.V.G.~were supported in part by NSF (QLCI award No.~OMA-2120757).
A.F., D.D., A.M.C., and A.V.G.~were supported in part by the DoE ASCR Quantum Testbed Pathfinder program (awards No.~DE-SC0019040 and No.~DE-SC0024220) and the DoE ASCR Accelerated Research in Quantum Computing program (awards No.~DE-SC0020312 and No.~DE-SC0025341).
A.F., D.D.,~and A.V.G.~were supported in part by AFOSR MURI, NSF STAQ program, and DARPA SAVaNT ADVENT.
D.D.\ acknowledges support by the NSF Graduate Research Fellowship Program (GRFP) under Grant No.~DGE-1840340, and an LPS Quantum Graduate Fellowship.
D.B.~acknowledges support from the NSF GRFP (grant DGE1745303) and the Fannie and John Hertz Foundation.
J.V.P.~acknowledges partial support from ONR (award No.~N000142212085).
A.V.G.~also acknowledges support from the U.S.~Department of Energy, Office of Science, National Quantum Information Science Research Centers, Quantum Systems Accelerator.

\end{acknowledgments}

\bibliographystyle{unsrtnat}
\bibliography{references}

\makeatletter\@input{xx_supp.tex}\makeatother
\end{document}

% --- supplement: supplement.tex ---

\title{Supplemental Material}
%
%
%
%
%
%
%
%
%
%
%
%
%
%
%
%
%
%
%
%
%

%

\maketitle

\else %
\newcommand{\refmain}[1]{\ref{#1}}
\fi %

\tableofcontents

\section{Routing Models}
In this section, we formulate and study multiple models of routing motivated by current experimental capabilities as well as a feasible upgrade. Each model is specified by a layout geometry in which the qubits are placed (a 1D chain or 2D grid), and a set of permutations which are considered single routing steps. A routing problem is then defined by a set of qubits $S$ with $|S| = N$, a permutation $\sigma\colon S \to S$ specifying the target configuration of the qubits, and a set of allowed single-step permutations $M$. %
%
%
%

\begin{definition}[Routing number]\label{def:routingnumber}
    For a given model $\alpha$ with $N$ atoms and a set of allowed single-step permutations $M$, a routing sequence is a list of sequential operations, called routing steps, $[m_1, m_2, \dots m_l]$, where each step $m_i$ is from the allowed set of single-step permutations $M$. The routing number $\Rt^{\alpha}_{\sigma}(N)$ is the minimum number of routing steps to implement the permutation $\sigma$.

    We also define the worst-case routing number 
    \begin{equation}
        \Rt^{\alpha} (N) = \max_{\sigma} \Rt_{\sigma}^{\alpha}(N).\label{eq:rtalpha}
    \end{equation}
    (In some cases we omit the label $\alpha$ when the model is clear from context.)
\end{definition}

%
%
%
The models we consider allow significant non-locality: moves involving the transportation of qubits across the entire extent of the qubit array may be considered single steps, even though 
the current experimental time cost of transporting a physical atom a distance $d$ scales as %
$\sqrt{d}$ in the far-distance limit \cite{bluvstein_quantum_2022}. We make this assumption because, at the scale of recent experiments, free-space transport times are typically comparable to 
%
the time required to transfer atoms between AOD and SLM traps, with both being on the order of $\sim 200 \mu s$ \cite{bluvstein_logical_2024}. In addition, in practice there is negligible impact of transport on qubit fidelity, but infidelities near $\lesssim 0.1 \%$ are %
caused by the transfer of atoms between tweezers. This further justifies our choice of single moves as being operations that involve a fixed number of AOD-SLM transfers, as these should incur a roughly constant level of error, regardless of the distance atoms are transported in a given operation \cite{bluvstein_quantum_2022,bluvstein_logical_2024}. Finally, our analyses apply generally to logical qubits formed by blocks of atoms or single-atom physical qubits, given the rectangular structure of AODs.

%
We describe two models of routing qubits stored in a one-dimensional line of static optical tweezers. In the first, each single step is a riffle shuffle permutation. Each step is implemented by picking up a subset $A$ of the array and interleaving it in between other atoms in the array. 
%

\begin{definition}[Single riffle shuffle routing step]\label{def:1DInsStep}
    For a one-dimensional array of $N$ qubits with relative positions $S = \{0, 1, \dots, N - 1\}$, a single routing step is a permutation $\sigma\colon S \to S$ given by $A \subset S$ and %
    functions $f\colon A \to S$, $g\colon S \setminus A \to S$, such that %
    \begin{equation}
        \sigma(i) = 
        \begin{cases}
            f(i) & \text{if $i \in A$}\\
            g(i) & \text{otherwise}.
        \end{cases}
    \end{equation}
    The functions $f$ and $g$ are constrained to be strictly increasing over their domains:
    \begin{align}
        \forall i,j \in A, i < j &\implies f(i) < f(j)\\
        \forall i,j \in S \setminus A, i < j &\implies g(i) < g(j).
    \end{align}

%
\end{definition}

We depict an example implementation of a single riffle shuffle step in Fig.~\ref{fig:1dInsStep}. With a 1D lattice of sufficiently spaced SLM traps, one performs a riffle shuffle by transporting the qubits in $A$ to appropriate positions between the lattice, and then picking up the rest of the array and expanding it by steering the AOD columns independently to place all of the atoms in SLM traps again. It should be noted that the transport of occupied AOD traps over empty SLM traps is avoided in order to maximize transport fidelity. %
Alternatively, Ref.~\cite{xu_constantoverhead_2024} describes another possible implementation: one generates an additional set of $N/2$ empty static tweezers, and then expands the whole array (again, by spacing out the AOD columns) to leave vacancies where one would like to insert $A$, transports $A$ into these vacancies, and then expands or contracts the array to prepare for the next routing step.

\begin{figure}
    \centering
    \includegraphics[width=250pt]{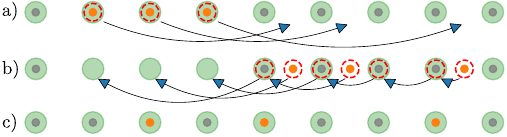}
    \caption{An example of a single riffle shuffle routing step. Light green disks and dashed red circles are SLM- and AOD-generated traps, respectively. Smaller orange dots are the atoms in the subset A (see Def.~\ref{def:1DInsStep}) and gray dots are the rest of the atoms. %
    \textbf{(a)} Orange atoms are picked up, moved down, and then transported in between their target locations. \textbf{(b)} Additional AOD columns are ramped up, picking up the rest of the %
    out-of-place atoms. All atoms in the AOD traps are then transported downwards, expanded, and then put in place in the SLM traps. \textbf{(c)} Final ordering.} 
    %
    \label{fig:1dInsStep}
\end{figure}

%

We define an alternative form of 1D routing in which a single step is an in-order swap between subsets $A$ and $B$ of the array. Each single step of this model can be implemented by two riffle shuffles, but we do not know if every riffle shuffle can be implemented in a constant number of in-order swap steps. We show later that routing with riffle shuffles achieves a constant factor speedup of $\log(3) / \log(2)$ over in-order swaps when reversing the order of all qubits. Notably, with in-order swaps, only the atoms being swapped are transferred between traps, whereas with riffle shuffles one may be required to transfer every atom of the array between SLM and AOD traps to insert even a single qubit into a different location in the array.

%
%
%
%
%
%
%
%
%
%

\begin{definition}[Single in-order swap routing step]\label{def:1dSwapStep}
    For a row of atoms whose positions are labeled by $S = \{0, 1, \dots, N - 1 \}$, one single swap is a permutation $\sigma\colon S \to S$ for which $\exists A, B \subseteq S$, $A = \{a_1, a_2, \dots, a_k\}, B = \{b_1, b_2, \dots, b_k\}$, $A \cap B = \emptyset$, $a_1 < a_2 < \dots < a_k$, $b_1 < b_2 < \dots < b_k$ and 
    \begin{equation}
        \sigma(i) = \begin{cases}
            b_i & \textrm{if\ } i \in A\\
            a_i & \textrm{if\ } i \in B\\
            i & \textrm{otherwise}.
        \end{cases}
    \end{equation}
\end{definition}

In Fig.\ \ref{fig:1dSwapStep}, we show an example of a rearrangement-based %
implementation of a single step in this model, where rearrangement-based means utilizing only AOD-SLM transfers and shuttling. %
One could alternatively implement single steps in this model via parallel swap gates, forming pairs of qubits in $A$ with those in $B$ by transport followed by a combination of single qubit and CZ gates between the pairs of qubits $A$ and $B$. This approach would likely introduce correlated errors not present in the rearrangement approach, however.%
\begin{figure}
    \centering
    \includegraphics[width=410pt]{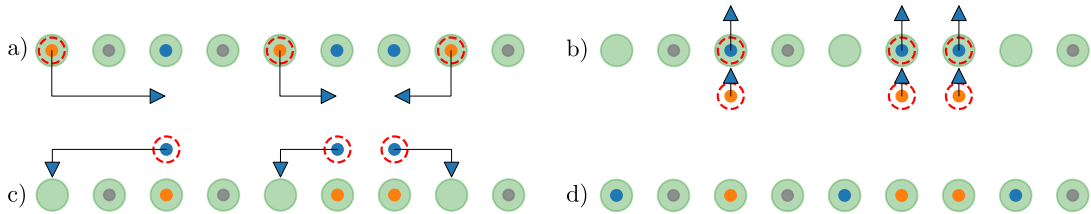}
    \caption{An example of a single step with 1D in-order swaps. \textbf{(a)} Orange qubits are picked up, moved down and then transported beneath their target locations. \textbf{(b)} Additional AOD columns are ramped up, picking up the blue qubits. Qubits are then transported upwards, replacing blue with orange. \textbf{(c)} One row of the AODs is ramped down, depositing the blue qubits, and then orange qubits are transported to their destinations. \textbf{(d)} Final ordering.}
    %
    \label{fig:1dSwapStep}
\end{figure}

We next formulate models of routing in two dimensions. In our first model, one only performs grid transfers of atoms, and a single step is a swap of any two \emph{rectangles} (subgrids) of the same dimension in a 2D grid of atoms.

%
   
%
%
%

\begin{definition}[Combinatorial Rectangle]\label{def:rect}
    A combinatorial rectangle $R$ is a set of the form $A \times B = \{ (i,j) \mid i \in A, j \in B \}$, where $A, B \subset \mathbb{Z}$ are the rows and columns of the rectangle. We refer to the $i$th element of a rectangle $R$ as $[R]_i$, where the points are ordered lexicographically, first by row, and then by column [i.e., $R = \{1, 3\} \times \{1, 3\}$ is ordered $R = ( (1, 1), (1,3), (3, 1), (3,3) )$]. %
    The dimension of a rectangle $R$, $\dim(R)$, is the ordered pair $(|A|, |B|)$.
\end{definition}

%
%
%
%
%
%
%
%
%
%

\begin{definition}[Single routing step with grid transfers]\label{def:2DRectSwaps}
    For $N = m m^\prime$ atoms on an $m \times m^\prime$ grid of static traps, a single step of routing is an in-order swap of two disjoint rectangles, represented by a permutation $\sigma$ such that there exist rectangles $R_1, R_2$ of the same dimension, with $R_1$ and $R_2$ sharing no points, where
    \begin{equation}
        \sigma(j) = \begin{cases}
            [R_2]_i & \text{\ if $j=[R_1]_i$ for some $i$}\\
            [R_1]_i & \text{\ if $j=[R_2]_i$ for some $i$}\\
            j & \text{otherwise}.
        \end{cases}
    \end{equation}
\end{definition}

Figure~\ref{fig:RectSwapEx} shows the steps to perform a swap between rectangles by rearranging the atoms. %
One uses grid transfers to pick up a rectangle $R_1$ and translate its rows and columns to be placed half of a lattice constant to one side of %
an equal-dimension rectangle $R_2$ [panel (b)], then picks up $R_2$ [panel (c)], translates %
both $R_1$ and $R_2$ by half of a lattice constant such that $R_1$ goes to the former location of $R_2$ and deposits it there
[panel (d)], and then translates and deposits $R_2$ in the former location of $R_1$ [panel (e)]. 
We require $R_1$ and $R_2$ to have equal dimension, but not necessarily the same shape, as one may deform the array $R$ in transport. Finally, our definition prevents the reordering of columns or rows in any of the swapped rectangles $R_1$ or $R_2$, as crossing AOD columns or rows would induce atom collisions, and potentially damage the AOD hardware, as mentioned in the main text. %
Alternatively, we could implement these single-step swaps via entangling gates analogously to 1D in-order swaps: transport the rectangle $R_1$ to form nearest-neighbor pairs with those qubits in $R_2$, perform  combinations %
of CZ gates and single-qubit gates to implement a swap, and finally transport the rectangle $R_1$ back to its original place. However, this again introduces the drawback of correlated noise.

\begin{figure}
    \centering
    %
    \includegraphics[width=500pt]{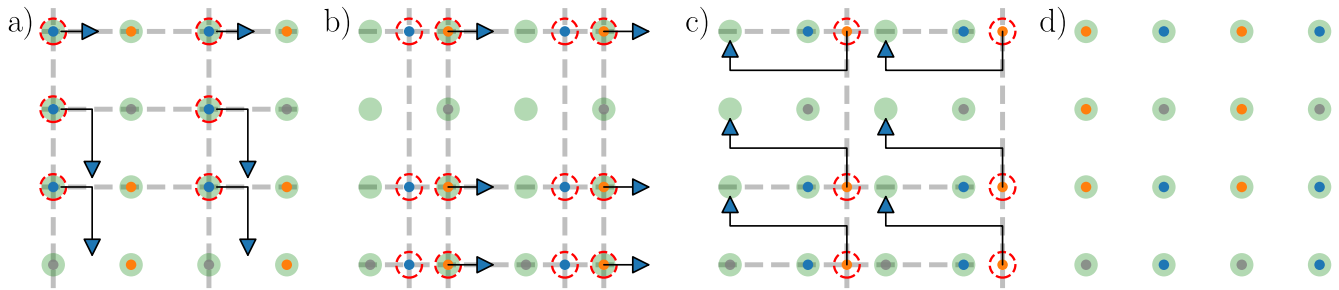}
    \caption{Step-by-step visualization of a swap of two combinatorial rectangles of a 2D array. Orange and blue dots are the two subsets involved in the swap, and gray dots are the rest of the atoms. Green circles are the SLM traps that hold the atoms before and after the swap. Red dashed circles are AOD traps that are used to transport and swap the atoms. \textbf{(a)} First, the first subset of atoms involved in the swap, $A$, are picked up from SLM traps by the AOD traps and are transported next to the second subset of atoms, $B$. \textbf{(b)} Then, using an additional row of AOD traps, $A$ ($B$) atoms are moved into (out of) the SLM traps. \textbf{(c)} Next, $B$ atoms are transported back to the original SLM traps occupied by $A$ atoms. \textbf{(d)} The final arrangement of the atoms after the swap. %
    \label{fig:RectSwapEx}}
    %
    %
\end{figure}

Finally, we consider a more powerful %
model of routing, extending the capabilities of grid transfers, in which one utilizes selective transfers to swap less constrained subsets of the 2D array. %

%
%
%

\begin{definition}[Single routing step with selective transfers]\label{def:2DArbPickStep}
    For $N$ atoms whose positions are labeled by $S^{2D} \subseteq \mathbb{Z} \times \mathbb{Z}$, a single step of routing is a permutation $\sigma\colon S^{2D} \to S^{2D}$ given by two equal-dimension rectangles $R_1, R_2$ and a masking function $\mathcal M \colon \mathbb{Z} \to \{0, 1\}$, such that %
    \begin{equation}
        \sigma(j) = \begin{cases}
            [R_1]_i & \textrm{if\ } j = [R_2]_i \textrm{\ and\ } \mathcal{M}(i) = 1 \textrm{\ for some\ } i\\
            [R_2]_i & \textrm{if\ } j = [R_1]_i \textrm{\ and\ } \mathcal{M}(i) = 1\textrm{\ for some\ } i\\
            j & \textrm{otherwise}.
        \end{cases}
    \end{equation}
\end{definition}

Though this model has expanded capabilities beyond routing with grid transfers, the other constraints associated with AODs remain: only the columns and rows of the array can be steered independently, and no two columns or rows may cross. The primitive operation of routing in this model is similar to routing with grid transfers, except that one can selectively choose which pairs of atoms are swapped between two rectangles with the masking function $\mathcal M$.

\section{Results for Routing in 1D}

%

In this section, we analyze one-dimensional routing. We first show algorithms for routing and upper bound the routing number of both one-dimensional models. We then show lower bounds on the routing number for both models via two proof techniques, one of which returns a stronger constant factor. 

\subsection{Protocols for 1D Routing}\label{sec:1d_protocols}

%

In this subsection, we discuss protocols for 1D routing. \textcite{xu_constantoverhead_2024} present an algorithm for routing with riffle shuffles. Their procedure works by partitioning qubits into the left and right parts of the array according to each qubit's destination side. This can be done with a single shuffle, picking up all qubits targeted towards the right half of the array and inserting them at the far-right end (i.e., those whose labels are in the set $\{i \mid i \leq \lfloor N / 2 - 1 \rfloor, \sigma(i) > \lfloor N / 2 - 1 \rfloor\}$). The algorithm then works recursively on the left and right partitions of the array, further dividing each in half so that each quarter of the whole array contains only qubits being routed to that quarter, etc., until all of the qubits are in their target locations. Each successive partitioning of the array remains a single shuffle step, as each interleaves atoms in order without crossing over into any other partitions of the array. The recurrence for the number of insertion steps is thus $T(N) \leq T(n/2) + 1$, which results in $T(N) \leq \log_2(N)$ steps. This upper-bounds the routing number for routing with shuffles as 
%
$\Rt(N) \leq \log_2(N)$. %
We demonstrate below a matching lower bound on this model, of $\Rt(N) \geq \log_2(N)$ steps, hence showing their work is optimal and that $\Rt(N) = \log_2(N)$.

For 1D routing with in-order swaps, one can make a small alteration to the routing procedure from Ref.~\cite{xu_constantoverhead_2024}: simply replace the partition step with a swap between the qubits on the left half targeting the right half. The number of atoms that are in the wrong partition on each side of the array must be the same, so this swap may always be performed. Likewise, one recurses on the left and right partitions, continuing to partition the array into quarters, etc., and each successive partitioning remains parallel as they all address distinct regions of the array. This leads to the same recurrence for the number of steps, and $T(N) \leq \log_2 N$. We demonstrate below a lower bound on this model of $\Rt(N) \geq \log_3(N)$ steps, determining the routing number to within a constant factor: $\log_3(N) \leq \Rt(N) \leq \log_2(N)$.
%
%

\subsection{Lower bounds for 1D Routing}\label{sec:1d_bounds}

%

To demonstrate lower bounds for 1D routing models, we define a monotone set $\mathcal R(\sigma)$ related to the reversal permutation. We then show that the size $|\mathcal R(\sigma)|$ of this monotone set can only decrease by a constant factor at each step when routing any permutation, even with an arbitrary amount of extra empty static traps or ancilla qubits.

%

\begin{definition}[1D Reversal Monotone Set]
    %
    For a permutation $\sigma\colon S \to S$ over $S = \{0, 1, \dots, N - 1\}$, the reversal set $\mathcal R(\sigma)$ is the largest subset $x \subseteq S$ such that
    \begin{equation}
        \forall i,j \in x,\, i < j \implies \sigma(i) > \sigma(j).
    \end{equation}
    If there are multiple $x$ that satisfy this condition, then we choose the lexicographically first such set.

    %
    %
    %
    %
    
    %
\end{definition}

The reversal set $\mathcal R(\sigma)$ is thus a largest subset of qubits that are reversed with respect to one another in the permutation $\sigma$. For example, in the permutation $\sigma = \bigl(\begin{smallmatrix}
    0 & \underline{1} & 2 & \underline{3} & \underline{4}\\
    1 & 4 & 0 & 3 & 2 
\end{smallmatrix} \bigr)$, $\mathcal R(\sigma) = \{1,3,4\}$. In this notation, elements $i,j$ at the top and bottom of a column, respectively, indicate that $\sigma$ sends the $i$th atom in the array to the $j$th location. Underlines indicate the elements of the domain in the reversal set. %
For the reversal permutation $\sigma_r(i) = N - 1 - i$, it is clear that $|\mathcal R(\sigma_r)| = N$, and that for any permutation $\sigma$, $1 \leq |\mathcal R(\sigma)| \leq N$.

We now show that the size of this monotone set can only decrease by a constant factor at each step when implementing any permutation in 1D. For each model, consider a permutation $\sigma\colon S \to S$ and a routing schedule of permutations $\sigma_1, \sigma_2, \dots, \sigma_k$, where $\sigma_i$ is the $i$th permutation applied and is an allowed single step in the relevant model, %
and implement $\sigma$ such that $\sigma = \sigma_k \sigma_{k-1} \dots \sigma_1$.
%
These steps are ordered swaps or riffle shuffles, depending on the chosen routing model. Before the $i$th step, we let $\tau_i$ denote the remaining permutation to be executed to implement the goal permutation $\sigma$, or $\tau_i = \prod_{j=k}^{i} \sigma_j$. We claim that for routing with riffle shuffles, $|\mathcal R(\tau_{i+1})| \geq \frac{1}{2}|\mathcal R(\tau_i)|$, and with in-order swaps, $|\mathcal R(\tau_{i+1})| \geq \frac{1}{3}|\mathcal R(\tau_i)|$. 

\begin{theorem} %
When routing with riffle shuffles, the size of the largest reversal in a routing schedule can only decrease by a factor of $\frac{1}{2}$ at each step, or $|\mathcal R(\tau_{i+1})| \geq \frac{1}{2} |\mathcal R(\tau_{i})|$.  %
\end{theorem}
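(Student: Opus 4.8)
The plan is to leverage the defining feature of a riffle shuffle step---namely, that it is order-preserving on each of two complementary blocks of the array---together with a pigeonhole argument. First I would record the algebraic relation between consecutive remaining permutations: since $\tau_i = \prod_{j=k}^{i}\sigma_j = \tau_{i+1}\sigma_i$, we have $\tau_i(a) = \tau_{i+1}(\sigma_i(a))$ for every position $a$. Thus the reversal set $\mathcal R(\tau_i)$, which by definition is a largest set of positions on which $\tau_i$ is strictly decreasing, should get pushed through $\sigma_i$ into a set on which $\tau_{i+1}$ is strictly decreasing---provided $\sigma_i$ does not scramble the relative order of the positions involved.

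Second, I would invoke the structure of $\sigma_i$ from Def.~\ref{def:1DInsStep}: there is a subset $A\subseteq S$ such that $\sigma_i$ is strictly increasing on $A$ and strictly increasing on $S\setminus A$. Writing $R = \mathcal R(\tau_i)$ and splitting $R = (R\cap A)\sqcup(R\setminus A)$, pigeonhole produces a part $R'$ with $|R'|\ge \tfrac12|R|$ that lies entirely inside $A$ or entirely inside $S\setminus A$; hence $\sigma_i$ is strictly increasing on $R'$. For any $a<b$ in $R'$ we then have $\sigma_i(a)<\sigma_i(b)$, while $a,b\in R$ forces $\tau_i(a)>\tau_i(b)$, i.e.\ $\tau_{i+1}(\sigma_i(a))>\tau_{i+1}(\sigma_i(b))$. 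So $\sigma_i(R')$---which has size $|R'|$ since $\sigma_i$ is a bijection---is a strictly decreasing subsequence of $\tau_{i+1}$, and therefore $|\mathcal R(\tau_{i+1})|\ge |\sigma_i(R')| = |R'|\ge\tfrac12|\mathcal R(\tau_i)|$.

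This essentially completes the proof, and it is manifestly independent of $N$, so it continues to hold verbatim in the presence of any number of ancillary qubits or empty static traps. The only place to be careful is the order of composition---getting $\tau_i = \tau_{i+1}\sigma_i$ rather than $\sigma_i\tau_{i+1}$---since the whole argument hinges on pushing the reversal \emph{forward} along $\sigma_i$ to compare against $\tau_{i+1}$. The same template yields the companion claim for in-order swaps: a single in-order swap step is strictly increasing on each of the three sets $A$, $B$, and $S\setminus(A\cup B)$ (Def.~\ref{def:1dSwapStep}), so splitting $R$ three ways gives a part of size $\ge\tfrac13|R|$ on which the step is order-preserving, and the identical pushforward argument gives $|\mathcal R(\tau_{i+1})|\ge\tfrac13|\mathcal R(\tau_i)|$.
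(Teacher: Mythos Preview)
Your argument is correct and is essentially the paper's proof: both split $\mathcal R(\tau_i)$ according to whether a qubit lies in the picked-up set $A$ or its complement, use pigeonhole to find a half of size at least $\tfrac12|\mathcal R(\tau_i)|$, and observe that $\sigma_i$ preserves order on that half so it remains a reversal for $\tau_{i+1}$. Your write-up is in fact more explicit than the paper's about the composition identity $\tau_i=\tau_{i+1}\sigma_i$ and why the pushforward $\sigma_i(R')$ witnesses a reversal of $\tau_{i+1}$, and your three-way split for in-order swaps is the clean formulation of the paper's Case~1/Case~2 argument there as well.
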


\begin{proof}
    As $\tau_{i+1}$ and $\tau_{i}$ are related by a riffle-shuffle permutation, $\tau_i = \tau_{i+1} \sigma_i$, there are two possible cases.

    \textbf{Case 1:} In the $i$th step $\sigma_i$, $j \geq \frac{1}{2}|\mathcal R(\tau_i)|$ qubits from the set $\mathcal R(\tau_i)$ are picked up and inserted elsewhere. %
    In this case, these $j$ qubits retain their relative ordering, and thus form a reversal. This guarantees that $|\mathcal R(\tau_{i+1})| \geq j \geq \frac{1}{2} |\mathcal R(\tau_i)|$.

    \textbf{Case 2:} In the $i$th insertion step $\sigma_i$, $j < \frac{1}{2}|\mathcal R(\tau_i)|$ qubits from the set $\mathcal R(\tau_i)$ are picked up and inserted elsewhere. %
    In this case, $k = |\mathcal R(\tau_{i})| - j \geq \frac{1}{2}|\mathcal R(\tau_{i})|$ qubits are not involved in the insertion, and retain their relative order, forming a reversal. This guarantees that $|\mathcal R(\tau_{i+1})| \geq k \geq \frac{1}{2} |\mathcal R(\tau)|$.
\end{proof}

We also show the analogous proof for routing with in-order swaps:
\begin{theorem}
    For in-order swaps, the size of the largest reversal in a routing schedule can only decrease by a factor of $\frac{1}{3}$ at each step, or $|\mathcal R(\tau_{i+1})| \geq \frac{1}{3} |\mathcal R(\tau_{i})|$.
\end{theorem}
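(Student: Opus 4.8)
The plan is to mirror the riffle-shuffle argument but to exploit the finer piecewise structure of an in-order swap. Write $R := \mathcal R(\tau_i)$ for the reversal set of $\tau_i$, so that $|R| = |\mathcal R(\tau_i)|$ and $x < y$ in $R$ implies $\tau_i(x) > \tau_i(y)$. Since $\tau_i = \tau_{i+1}\sigma_i$ we have $\tau_{i+1}(\sigma_i(x)) = \tau_i(x)$ for every $x$. The key move is to \emph{transport} $R$ forward through $\sigma_i$: if I can exhibit a subset $R' \subseteq R$ on which $\sigma_i$ is order preserving, then $\sigma_i(R')$ is automatically a reversal of $\tau_{i+1}$. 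Indeed, for any $p < q$ in $\sigma_i(R')$ I may write $p = \sigma_i(x)$, $q = \sigma_i(y)$ with $x,y \in R'$, and order preservation forces $x < y$; then $\tau_{i+1}(p) = \tau_i(x) > \tau_i(y) = \tau_{i+1}(q)$. Since $\sigma_i$ is a bijection, $|\sigma_i(R')| = |R'|$.

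The observation that makes this work is that an in-order swap is piecewise monotone with exactly three pieces. Let $A = \{a_1 < \dots < a_k\}$ and $B = \{b_1 < \dots < b_k\}$ be the disjoint swap sets of $\sigma_i$. On the fixed-point set $S \setminus (A \cup B)$, $\sigma_i$ is the identity; on $A$ it is the map $a_\ell \mapsto b_\ell$, which is strictly increasing because both lists are sorted; and on $B$ it is $b_\ell \mapsto a_\ell$, also strictly increasing. Hence $\sigma_i$ restricted to each of $A$, $B$, and $S \setminus (A \cup B)$ is strictly increasing.

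To finish, I partition $R$ into the three disjoint pieces $R \cap A$, $R \cap B$, and $R \setminus (A \cup B)$. One of them, call it $R'$, has size at least $|R|/3$, and by the previous paragraph $\sigma_i$ is order preserving on $R'$. By the first paragraph $\sigma_i(R')$ is then a reversal of $\tau_{i+1}$ with $|\sigma_i(R')| = |R'| \geq |R|/3$, and since $\mathcal R(\tau_{i+1})$ is a largest reversal of $\tau_{i+1}$ we conclude $|\mathcal R(\tau_{i+1})| \geq |\sigma_i(R')| \geq \tfrac{1}{3}|\mathcal R(\tau_i)|$. The argument never refers to the size of the domain $S$, so it is unaffected by the presence of extra empty traps or ancillas.

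The only genuine obstacle is spotting the right reformulation: viewing the reversal set as an object to push through $\sigma_i$ rather than one to intersect against the structure of $\sigma_i$ directly, together with the elementary but essential fact that an in-order swap decomposes into exactly three monotone pieces. This three-versus-two piece count (a riffle shuffle has only the two pieces $A$ and $S\setminus A$) is precisely what turns the $\tfrac12$ of the previous theorem into $\tfrac13$ here, and hence what produces the $\log_3$-versus-$\log_2$ gap between the two 1D models.
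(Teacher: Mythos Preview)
Your proof is correct and is essentially the same argument the paper gives: both exploit that an in-order swap is order preserving on each of the three pieces $A$, $B$, and $S\setminus(A\cup B)$, so one of $R\cap A$, $R\cap B$, $R\setminus(A\cup B)$ has size at least $|R|/3$ and survives as a reversal of $\tau_{i+1}$. The paper phrases this as a two-case threshold analysis (either the larger of $|R\cap A|,|R\cap B|$ is $\geq |R|/3$, or else the uninvolved part is), while you state it directly as a three-way pigeonhole; your formulation is cleaner, but the content is identical.
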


\begin{proof}
    As $\tau_{i+1}$ and $\tau_i$ are related by a single swap step, $\tau_i = \tau_{i+1} \sigma_i$, there are two possible cases.

    \textbf{Case 1:} In the $i$th swap step $\sigma_i$, $j \geq \frac{1}{3}|\mathcal R(\tau_i)|$ qubits from the set $\mathcal R(\tau_{i})$ are swapped with $j^\prime \leq j$ other qubits in $\mathcal R(\tau_i)$ and $j - j^\prime$ qubits in $S \setminus \mathcal R(\tau_i)$. %
    In this case, these $j$ qubits retain their relative ordering and are a reversal, thus $|\mathcal R(\tau_{i+1})| \geq j \geq \frac{1}{3} |\mathcal R(\tau_i)|$.

    \textbf{Case 2:} In the $i$th swap step $\sigma_i$, $j < \frac{1}{3}|\mathcal R(\tau_i)|$ qubits in $\mathcal R(\tau_i)$ are swapped %
    with $j^\prime \leq j$ other qubits in $\mathcal R(\tau_i)$ and $j - j^\prime$ qubits in $S \setminus \mathcal R(\tau_i)$. In this case, $k = |\mathcal R(\tau_i)| - j-j' \geq \frac{1}{3} |\mathcal R(\tau_i)|$
    %
    qubits in $\mathcal R(\tau_i)$ are not involved in the permutation and remain a reversal. Therefore, $|\mathcal R(\tau_{i+1})| \geq k \geq \frac{1}{3} |\mathcal R(\tau_i)|$. 
\end{proof}
    
We have shown that the reversal monotone $|\mathcal R(\tau_i)|$ decreases by at most a factor of $\frac{1}{2}$ or $\frac{1}{3}$ in each step when implementing any permutation with riffle shuffles or in-order swaps, respectively. After the final step in implementing $\sigma$, the remaining permutation is the identity $\id$, and $|\mathcal R(\id)| = 1$. Denoting the number of steps to implement a reversal of size $k$ by $\tilde \Rt_r(k)$, we can lower bound the number of steps needed to implement any permutation $\sigma$ in 1D: $\Rt(\sigma) \geq \tilde \Rt_r(|\mathcal R(\sigma)|)$, so for some constant $c$ in each model, $\tilde \Rt_r(N) \geq 1 + \tilde \Rt_r(\frac{N}{c})$. This recurrence has the solution $\tilde \Rt_r(N) \geq \log_c (N)$. For riffle shuffles, $c=2$, so by considering a reversal of $N$ qubits, $\Rt(N) \geq \log_2(N)$. For in-order swaps, $c=3$, so $\Rt(N) \geq \log_3(N)$. We note that the constant-factor discrepancy in lower bounds results from the fact that it takes only $\log_3 N$ in-order swaps to implement the reversal. This is done by swapping the first and last thirds of the array in a single step, and then recursing on each third of the array and performing the same swap of first and last thirds until the reversal is implemented.

We can show bounds with the same asymptotic scaling in $N$
%
up to a weaker constant factor 
for both of these 1D models by a counting argument, which applies generally to all models of routing, as follows.

\begin{theorem}\label{thm:counting}
    For a routing model $\alpha$ over $N$ qubits with a set of $k$ single-step permutations, $\Rt^\alpha(N) \geq \frac{N \log (N / e)}{\log k} - 1$.
\end{theorem}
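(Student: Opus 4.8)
The plan is to use a counting (information-theoretic volume) argument. The starting point is that there are exactly $N!$ permutations on the $N$ qubits, and by the definition of the worst-case routing number, every one of them must be expressible as a product $\sigma_\ell \sigma_{\ell-1}\cdots\sigma_1$ of at most $\ell := \Rt^\alpha(N)$ single-step permutations, each drawn from the allowed set $M$ with $|M| = k$. So I would upper-bound the number of permutations reachable in at most $\ell$ steps by the number of routing sequences of length at most $\ell$, and then require this to be at least $N!$.

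First I would count the routing sequences: a sequence of length exactly $i$ is an ordered list of $i$ elements of $M$, so there are $k^i$ of them, and the total number of sequences of length $0,1,\dots,\ell$ is $\sum_{i=0}^{\ell} k^i = \frac{k^{\ell+1}-1}{k-1} < k^{\ell+1}$ for $k \geq 2$ (the case $k=1$ is degenerate, since then no permutation other than powers of a single one is realizable, and $\Rt^\alpha(N)$ is infinite for $N \geq 3$; one also assumes $M$ generates the full symmetric group so that $\Rt^\alpha(N)$ is finite at all). Distinct sequences may implement the same permutation, but since we only need an upper bound on the number of realizable permutations, this is fine. Demanding that all $N!$ permutations be realizable within $\ell$ steps then forces $k^{\ell+1} > N!$.

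Next I would take logarithms of $k^{\ell+1} > N!$ and invoke Stirling's inequality in the form $N! \geq (N/e)^N$, obtaining $(\ell+1)\log k > \log(N!) \geq N\log(N/e)$. Rearranging gives $\ell > \frac{N\log(N/e)}{\log k} - 1$, and hence $\Rt^\alpha(N) \geq \frac{N\log(N/e)}{\log k} - 1$, which is the claim; note the ratio $\frac{N\log(N/e)}{\log k}$ is independent of the logarithm base.

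I do not expect a genuine obstacle: the argument is essentially a one-line pigeonhole estimate once the bookkeeping is fixed. The only subtleties worth being careful about are (i) counting sequences of every length up to $\ell$ rather than just length exactly $\ell$, since that is precisely where the additive $-1$ originates, and (ii) recording the mild nondegeneracy assumptions ($k \geq 2$, and $M$ generating $S_N$) needed for the statement to be meaningful.
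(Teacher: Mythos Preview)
Your proposal is correct and follows essentially the same approach as the paper: both use the pigeonhole principle by bounding the number of reachable permutations after $\ell$ steps by the geometric sum $\sum_{i=0}^{\ell} k^i \leq k^{\ell+1}$, compare this to $N! \geq (N/e)^N$ via Stirling, and solve for $\ell$. Your additional remarks on the degenerate case $k=1$ and the need for $M$ to generate $S_N$ are not made explicit in the paper but are reasonable caveats.
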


\begin{proof}
    For $N$ qubits, there are $N!$ possible permutations. By the pigeonhole principle, in $j$ routing steps, one may only generate at most $\frac{k^{j + 1} - 1}{k - 1} \leq k^{j  + 1}$ unique permutations. This can be seen my counting the number of nodes of a tree with branching factor $k$ and height $j$. To generate all possible $N!$ permutations, we need at least $j$ steps with 
    \begin{align}
        k^{j + 1} \geq N! &\geq \left(\frac{N}{e}\right)^N,
    \end{align}
    so
    \begin{align}
        j &\geq \frac{N \log \frac{N}{e}}{\log k} - 1
    \end{align}
    as claimed.
\end{proof}
%
%
%

%
%
%

Applying this, we see that in the case of routing with riffle shuffles %
over $N$ qubits, one starts by selecting one of $\binom{N}{i}$ size $0 \leq i \leq N$ subsets of the array, then chooses one of ${\binom{(N - i) + i}{i}}$ ways to insert the selected qubits between the remaining $N - i$ qubits. This means there are $k \leq \sum_{i=0}^{N} \binom{N}{i} \binom{N}{i} = \binom{2N}{N} \leq (2e)^{2N}$ possible single insertion steps. This shows that, for riffle shuffles, $\Rt(N) \geq \frac{\log N - 1}{2 (1 + \log 2)} - 1$. For in-order swaps over $N$ qubits, the number of single steps is upper bounded by the number of pairs of equal-sized unique subsets of the array, so $k \leq \sum_{i=0}^N \binom{N}{i}^2 = \binom{2N}{N}$. This also leads to $\Rt(N) \geq \frac{\log N - 1}{2 \left(1 + \log 2 \right)} - 1$.
%

\section{Results for Routing in 2D}

We now analyze two-dimensional routing.

\subsection{Protocols for 2D Routing}\label{sec:2d_protocols}

In this subsection, we present an algorithm for routing in two dimensions for both grid transfers and selective transfers. The algorithm takes advantage of a natural embedding of the $d$-dimensional hybercube graph $Q_d$ in 2D space that allows for parallel rearrangement between its hypercube subgraphs of smaller dimensions \cite{bluvstein_logical_2024}. $Q_d$ can be defined recursively by successive Cartesian graph products of the $K_2$ complete graph on two vertices: $Q_d = K_2^{\times d}$. Equivalently, one can construct the hypercube graph $Q_d$ by taking a set of vertices $V_d = \{0, \dots, 2^d - 1\}$ and connecting them via the set of edges $E_d = \{ (v_1, v_2) \mid v_1, v_2 \in V_d, H(v_1, v_2) = 1\}$ where $H(v_1, v_2)$ is the Hamming distance between $v_1$ and $v_2$ (i.e., two vertices $v_1$ and $v_2$ are connected if they differ on only one bit). %

Our embedding of the hypercube in 2D space is as follows. For the hypercube $Q_d$ on $N = 2^d$ qubits, we arrange the set of qubits on a $\sqrt{2N} \times \sqrt{N/2}$ %
grid for $d$ odd or a $\sqrt{N} \times \sqrt{N}$ grid for $d$ even. An example for $d=4$ is shown in Fig.~\ref{fig:HCEmbedding}. We label each qubit by an integer corresponding to its column or row, formed by the $\lfloor d/2 \rfloor$-bit binary representation of the row and $\lceil d/2 \rceil$-bit binary representation of the column concatenated. For example, a qubit in row $3$ and column $1$ would be written $\textcolor{red}{10}\textcolor{blue}{01}$, shown with our convention of highlighting the corresponding row bits in red and column bits in blue.
%

%

%
Routing utilizes the recursive structure of hypercubes. 
One can observe that the induced subgraph on $Q_d$ made by vertices that only start with $0$ forms the hypercube graph $Q_{d-1}$, and likewise for those vertices starting with $1$. We define this idea in general, as our algorithm calls for routing on these hypercube subgraphs recursively.

\begin{figure}
    \centering
    \includegraphics[width=150pt]{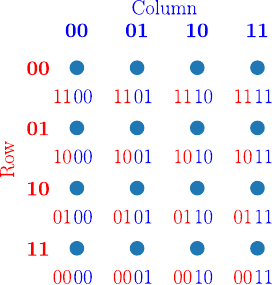}
    \caption{%
    The embedding of the hypercube $Q_4$ in a 2D array of traps. Higher-order bits (in red) correspond to the binary representation of the row and lower-order bits (in blue) correspond to the column.}
    \label{fig:HCEmbedding}
\end{figure}

%
%
%

%

%
%

%

\begin{definition}[Subhypercube]
    \label{def:subhypercube}
    For the hypercube $Q_d$, the subhypercube given by a bit string $a \in \{0,1\}^{*}$, where $\{0, 1\}^{*}$ indicates the set of all bit strings of any length, is the induced subgraph given by the vertices of $Q_d$ whose binary addresses start with $a$. Formally, the vertices of the subhypercube are %
    \begin{equation}
        V_a(Q_d) = \{q \in Q_d \mid \exists b \in \{0, 1\}^{d - |a|} \textrm{\ s.t.\ } q = ab \},
    \end{equation}
    where $ab$ indicates the concatenation of the binary strings $a$ and $b$. The edges of the subhypercube are those in $Q_d$ connecting pairs of vertices in $V_a(Q_d)$. One could also think of the subhypercube as the hypercube graph $Q_{d - |a|}$, with none of its edges disturbed but each vertex renamed by prepending the bit string $a$ to its binary address. Though we could also specify a hypercube subgraph of $Q_d$ by considering the vertices that all share some $i$th bit, our definition only considers fixing the leading bits as this is the useful concept for our algorithm. %
\end{definition}

%
%
    
%
%
%
%
%
%
%
%
%
%
%
%

%
Our routing algorithm schedules sets of parallel swaps between vertices of the hypercube graph that differ on some specified bit at each step. It is useful to introduce two definitions related to this concept: the \emph{cut}, which partitions the vertices into two sets depending on their $d^\prime$th bit, and the \emph{cutset}, which is the set of edges connecting vertices from different sides of the partition.

\begin{definition}[Cut and cutset of a dimension $d^\prime$ on the hypercube]
    \label{def:cut_and_cutset}
    The cut across dimension $d^\prime$ is the sets of vertices $C_{d^\prime} = (V_1, V_2)$, where $V_1$ are the vertices whose $d^\prime$th bit are 0, and $V_2$ are the vertices whose $d^\prime$th bit are 1. Formally,
    \begin{equation}
        V_1 = \{v \in Q_d \textrm{\ s.t\ } \exists a \in \{0,1\}^{d^\prime}, b \in \{0,1\}^{d - d^\prime - 1}, v = a0b \}
    \end{equation} 
    and 
    \begin{equation}
        V_2 = \{v \in Q_d \textrm{\ s.t\ } \exists a \in \{0,1\}^{d^\prime}, b \in \{0,1\}^{d - d^\prime - 1}, v = a1b \},
    \end{equation}
    where $a0b$ indicates the string made by concatenating the binary strings $a$, $0$, and $b$, and likewise for $a1b$. The cutset is the set of edges connecting pairs of vertices from $V_1$ and $V_2$ that differ only on the $d^\prime$th bit:
    \begin{equation}
        E_{d^\prime} = \{(v_1, v_2) \in E(Q_d) \mid v_1 \in V_1, v_2 \in V_2 \}.
    \end{equation}
\end{definition}

    It should be noted that the subgraph induced by vertices $V_1$ on $Q_d$ is the graph of $2^{d^\prime}$ subhypercubes given by the bit strings $a0, \forall a \in \{0,1\}^{d^\prime}$, but with additional edges connecting vertices of each subhypercube within each side of the cut. Similarly, the subgraph induced by $V_2$ on $Q_d$ is the graph of $2^{d^\prime}$ subhypercubes given by $a1, \forall a \in \{0,1\}^{d^\prime}$. In the main text, Fig.~\refmain{fig:hcembedding}(a) depicts the cut $(V_1, V_2) = C_3$ on the hypercube $Q_4$, with $V_1$ given by orange vertices and $V_2$ by blue. The edges of the cutset $E_3$ are given by dashed lines. %

Next, we show that, given any subset $E^\prime$ of the cutset of edges $E_{d^\prime}$ of $Q_d$, one can perform a swap across the pairs of qubits connected by the edges in $E^\prime$ in a single step with selective transfers, provided the qubits are arranged in our hypercube embedding. %
Each edge $e \in E^\prime$ connects pairs of vertices across the cut $C_{d^\prime} = (V_1, V_2)$. Both $V_1$ and $V_2$ correspond to the set of qubits with the $d^\prime$th bit set to $0$ or $1$, so they are formed by eliminating either a set of rows or columns from the set of all $N$ qubits, depending on the value of $d^\prime$. This means $V_1$ and $V_2$ are rectangles of the same dimension. Then, by choosing a proper masking function $\mathcal M \colon \mathbb{Z} \to \{0, 1\}$, the selective transfers swap step given by $V_1$, $V_2$, and $M$ will swap all of the atoms across the desired set of edges $E^\prime$.
%
%
%
To perform the same set of swaps across $E^\prime$ using only grid transfers, we instead must break down the subsets of atoms of $V_1$ and $V_2$ which are being swapped into sets of disjoint rectangles, and perform a series of swaps with these. The problem of decomposing sets of qubits into rectangles is studied in Ref.~\cite{tan_depthoptimal_2024}, where an optimal algorithm utilizing a theorem prover and polynomial-time heuristic are presented.
%
Additionally, it is known that, for a rectangular array of points of size $m \times m^\prime$, there exist subsets needing at least $\min(m, m^\prime)$ rectangles to be described. %
As we embed in a $\sqrt{2N} \times \sqrt{N/2}$ or $\sqrt{N} \times \sqrt{N}$ grid, breaking down some selective transfers swap steps into swaps of disjoint rectangles generally will require $\Omega(\sqrt N)$ rectangles, and thus the naive method of decomposing a point array into a single rectangle per row or a single rectangle per column never performs worse than the worst case of an optimal algorithm. %
For this reason, we will see that there is often a factor of $\sqrt{N}$ advantage in routing with selective transfers as compared to grid transfers. This advantage does not hold, however, in the case of sparse routing, where, as we show in Section~\ref{sec:sparse}, there is only a polylogarithmic separation between the models' routing numbers for this task. %

%

\begin{figure*}
\begin{algorithm}[H]
    \caption{Hypercube routing with selective transfers}
    \label{alg:hypercuberouting}
j    \begin{algorithmic}
        \LComment{Subroutine to fix errors between%
        subhypercubes given by bit-strings $a0$ and $a1$. Described in Sec.~\ref{sec:matching_appendix}}
        \Procedure{Cut-Errors}{$a \in \{0, 1\}^*$, $d \in \mathbb{Z}$, $\sigma\colon [2^d] \to [2^d]$}
            \State $V_{c1} \gets$ the vertices of the subhypercube given by $a0$
            \State $V_{c2} \gets$ the vertices of the subhypercube given by $a1$
            \LComment{$E_c$ is the subset of the cutset across dimension $|a|$ connecting subhypercubes $a0$ and $a1$}
            \State $E_c \gets \{(v_1, v_2) \in E(Q_d) \mid v_1 \in V_{c1} \land v_2 \in V_{c2}\}$
            %
            %
            \State $G = (V,E) \gets (V_{c1} \cup V_{c2}, \emptyset)$
            \State Mark the edges in $E_c$ green and add them to $E$
            \State $C \gets \{(i,j) \mid \exists b \in \{0,1\}^{d - 1 - |a|}, \sigma(i) = a0b \land \sigma(j) = a1b\}$
            \ForAll{$(i,j) \in C$}
                \If{$i$ and $j$ are on the same side of the cut given by $a$}
                    \State Add $(i,j)$ to $E$, marked red
                \Else
                    \State Add $(i,j)$ to $E$, marked blue
                \EndIf
            \EndFor

            \LComment{Now we construct the error graph, and perform perfect matching}
            \State $G_E = (V_E, E_E)$
            \State $V_E \gets \{e \in E \mid \textrm{e is marked red}\}$
            \ForAll{Paths $P \subset E$ on $G$ of green and blue edges connecting pairs of red edges $e_1,e_2 \in E$}
                %
                %
                \State $E_E \gets (e_1, e_2)$
            \EndFor

            %
            \State $E_p \gets $ the edges of a perfect matching on the graph $G_E$
            \State $\sigma_E \gets \id$
            \ForAll{$e = (v_1, v_2) \in E_p$}
                \LComment{Note: $v_1, v_2$ are red edges in $E$}
                \State $P \gets $ the green edges of a path of only green and blue edges connecting the edges $v_1, v_2 \in E$
                \State $\sigma^\prime \gets $ the permutation corresponding to swaps along the edges of P
                \State Perform $\sigma^\prime$ \Comment{This corresponds to a single selective transfers step}
                \State $\sigma_E \gets \sigma^\prime \sigma_E$ \Comment{We don't have to worry about commutativity issues, as all the P selected in this loop will be disjoint from each other}
            \EndFor
            \State \Return $\sigma_E$
        \EndProcedure

        \LComment{Generates a routing procedure for the permutation $\sigma$ using selective transfers. Can be readily converted to one using grid transfers.}
        \Procedure{Hypercube-Routing}{$a \in \{0,1\}^*, d \in \mathbb{Z}, \sigma\colon [2^d] \to [2^d]$}
            \State $\sigma_1 \gets $\Call{Cut-Errors}{$a$, $d$, $\sigma$}%
            \State $\sigma^\prime \gets \sigma_1^{-1} \sigma$ \Comment{The remaining permutation to route $\sigma$}            \LComment{Note: The next two calls can be parallelized with selective transfers}
            \State $\sigma_2 \gets $ \Call{Hypercube-Routing}{$a0$, $d$, $\sigma^\prime$}
            \State $\sigma_3 \gets $\Call{Hypercube-Routing}{$a1$, $d$, $\sigma^\prime$}

            \State $\sigma^{\prime} \gets \sigma_3^{-1} \sigma_2^{-1} \sigma_1^{-1} \sigma$ \Comment{The remaining permutation}
            \State $E_c \gets $ the cutset on $Q_d$ given by $a$
            \LComment{At this point, the only errors remaining in the hypercube are pairs of atoms across the cutset whose high order bits are on the wrong cut side}
            \State $E_E \gets \{(v_1, v_2) \in E_c \mid v_1 \neq \sigma^\prime(v_1) \}$
            \State $\sigma_f \gets $ the permutation corresponding to swaps along the set of edges $E_E$ 
            \State Perform $\sigma_f$ \Comment{This is one selective transfers step}
            \State \Return $\sigma_1 \sigma_2 \sigma_3 \sigma_f$
        \EndProcedure
    \end{algorithmic}
\end{algorithm}
\end{figure*}

Finally, with these primitives, we show an algorithm (Alg.~\ref{alg:hypercuberouting}) for routing with selective transfers based on prior works in hypercube graph routing. Its output can be converted to a routing schedule for routing with grid transfers. %
It is shown in Ref.~\cite{alon_routing_1993} that, in the case of routing by swaps on a coupling graph, $\Rt (G \times G^\prime) \leq 2 \Rt(G^\prime) + \Rt(G)$. Routing in such a product graph can be performed as follows. First, route in parallel across the subgraphs corresponding to copies of $G$ so that each subgraph corresponding to a copy of $G^\prime$ contains a set of vertices representing all addresses from $G^\prime$ (the existence of a set of permutations that can be performed across the copies of $G$ to satisfy this requirement is guaranteed by Hall's marriage theorem). Then, route on the copies of $G^\prime$ to put the vertices' $G^\prime$ addresses in order. Finally, route in the copies of $G$ to correct all the vertices' $G$ addresses. This completes routing of $G \times G^\prime$.

We show this algorithm can be scheduled to run in $2d - 1$ steps for our embedding of $Q_d = K_2 \times Q_{d-1}$. The first step of the algorithm performs a set of swaps across the copies of $K_2$, which is a selected subset of the cutset edges $E_0$ of the graph. We have already shown this to be a single selective transfers routing step. This subset is chosen by a matching procedure so that each side of the cut contains a set of vertices whose last $d - 1$ bits form the set of all $d-1$ length bit strings. We outline this matching procedure in Sec.~\ref{sec:matching_appendix}. Following this, the algorithm recurses, calling for routing on each $Q_{d-1}$ subgraph in the cut $C_0$. This sorts each subhypercube on each side of the cut so that the last $d - 1$ bits of each atom's destination match its location address, but leaves the first bit of each address potentially out of place. We perform one more swap across a subset of edges of the cutset $E_0$, swapping those atoms whose first bits of their destination don't match their location to put them in place. Again this is a single selective transfers step. 

Note that besides the recursive step, the routing algorithm only ever calls for swaps across a subset of the cutset $E_0$. Thus, when the recursive step is called, the algorithm considers the subhypercubes given by bit strings $0$ and $1$, and performs swaps across subsets of $E_1$, $E_2$, $E_3$, etc. By scheduling these swaps so all of the swaps across $E_1$ occur at the same step (and likewise for $E_2$, $E_3$, etc.), routing on each subhypercube across the cut is done in parallel, and each set of swaps across each cutset $E_i$ is done in parallel. Since the procedure makes one parallel recursive call and two selective transfers steps, it routes in $T(d) = 2 + T(d - 1)$ selective transfers steps, which means $T(d) = 2d - 1$ (equivalently, $T(N) = 2 \log_2 N - 1$). This shows that $\Rt(N) \leq 2 \log_2 N - 1$ with selective transfers. To route with only grid transfers, one converts each selective transfers step from this algorithm into a series of grid transfers as outlined earlier, leading to a $\sqrt N$ increase in the number of steps. Thus, for grid transfers, $\Rt(N) \leq \sqrt N (2\log_2 N - 1)$. 
%
%
%
%

%

%

%
%

%
%
%
%

%
%
%

\subsection{Matching Procedure}\label{sec:matching_appendix}

In this subsection, we describe the matching procedure mentioned in Alg.~\ref{alg:hypercuberouting}. An example of a routing sequence on the hypercube $Q_4$, including the matching step, is shown in Fig.~\ref{fig:RoutingDesc}. In Fig.~\ref{fig:RoutingDesc}(a), the atoms are depicted in the 2D embedding, with their destination address indicated by the label the arrows point to. We underline the first bit to indicate that this is the highest level of the recursive routing step, and that we will depict the steps that perform swaps across the cutset $E_1$ (the set of edges connecting atoms that differ on the first bit). The remaining bits are overlined to indicate that they represent the destination address in the subhypercube they are being routed to. The matching procedure begins in Fig.~\ref{fig:RoutingDesc}(b). Green edges are added to connect vertices of the hypercube that differ on the first bit (note, we refer to the vertex address, not its destination address marked by the arrow). Then, blue edges are added to connect atoms whose destination addresses differ on the first bit. This creates a series of loop graphs. We highlight one loop graph by leaving its edges solid and making the other loops dashed.

In the next step, shown in Fig.~\ref{fig:RoutingDesc}(c), we continue matching. If two qubits are connected by a blue edge, but reside on the same side of the cut $C_1$, we consider them to be an ``error'' and convert this blue edge to a red edge. Then, on each loop graph, a perfect matching is performed, and green edges found to be in the perfect matching are marked orange. There are only two possible perfect matchings on each loop, as one must choose the green edges of alternating paths connecting red edges, since, in a perfect matching, each red error edge only has one matched edge connecting it. For any loop, the overlined part of the qubits' destinations on each side of the cut form a family of two sets (in the highlighted loop in Fig.~\ref{fig:RoutingDesc}(b), %
$\{\{100, 100\}\}, \{000, 000\}\}$). This set always satisfies Hall's marriage condition, so a perfect matching corresponding to a traversal of the set always exists.

Figure~\ref{fig:RoutingDesc}(d) shows the result of swapping across the matching edges found in the last step. The sets of overlined addresses on each side of the cut $C_1$ across the first bit in the array are now unique. In Fig.~\ref{fig:RoutingDesc}(e), routing on the subhypercubes given by the addresses $0$ and $1$ is performed, thus sorting the atoms.  %
Pairs of qubits that are on the wrong side of the cut $C_1$ are connected by an edge again. Figure~\ref{fig:RoutingDesc}(f) shows the array after these edges are swapped, thus sending all atoms to their destination.

\begin{figure}
    \centering
    \includegraphics[width=0.8\linewidth]{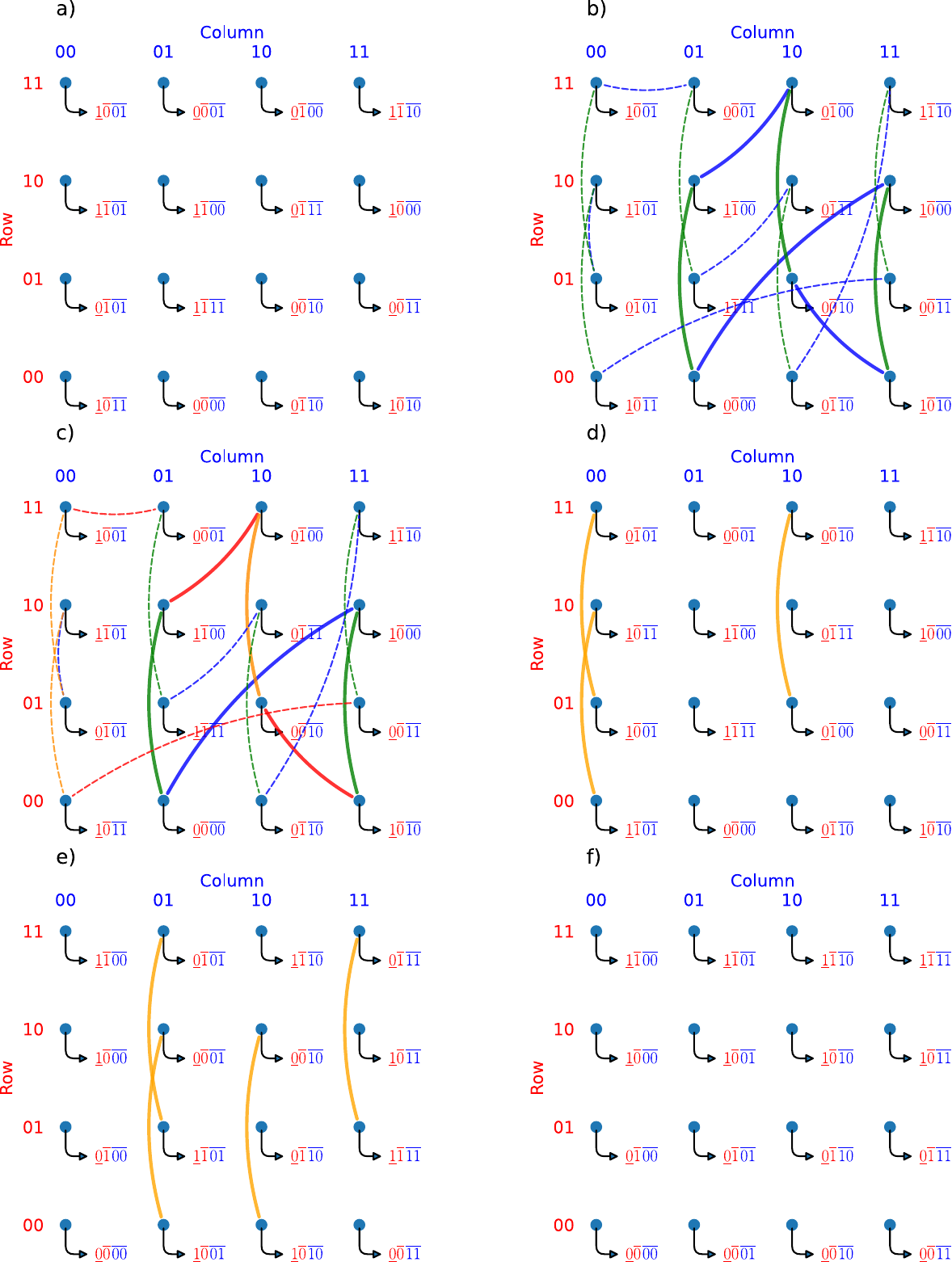}
    \caption{Illustration of the matching procedure for hypercube routing. 
    %
    \textbf{(a)} The vertices of the atoms shown in the hypercube embedding, with arrows pointing to the binary vertex addresses of their destination. Underlined bits indicate the side of the cut $C_0$ (Def.~\ref{def:cut_and_cutset}) of the destination. Overlined bits indicate the address of each destination in the subhypercube (Def.~\ref{def:subhypercube}) it is routed to. \textbf{(b)} Loops are formed by adding green edges between each vertex on each side of the cut, and blue edges between vertices whose destinations have the same (overlined) subhypercube address. \textbf{(c)} Blue edges that connect vertices on the same side of the cut $C_0$ are considered errors, and marked red. Green edges that are part of the perfect matching are marked orange. \textbf{(d)} All other edges are removed to highlight only the edges of the perfect matching. Swaps between qubits connected by matched edges are performed. \textbf{(e)} Routing is called recursively on each subhypercube of the cut $C_0$. Only first (overlined) bits remain out of order, so swaps (marked by orange edges) are scheduled to swap qubits into their destination. \textbf{(f)} Routing is completed.
    \label{fig:RoutingDesc}}
\end{figure}

\subsection{Lower Bounds for 2D routing}\label{sec:2d_bounds}
In this subsection, for both models of 2D routing, we derive lower bounds on the routing number by applying Theorem~\ref{thm:counting}. 
%

%
\begin{theorem}\label{thm:counting_2DRect}
    For 2D routing with grid transfers, $\Rt(N) = \Omega(\sqrt N \log N)$ for a square grid of $N = m \times m$ atoms.
\end{theorem}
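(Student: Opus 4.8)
The plan is to derive this bound as a direct corollary of the generic counting bound, Theorem~\ref{thm:counting}, by estimating the number $k$ of distinct single-step permutations available to the grid-transfer model on an $m \times m$ grid.

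First I would count combinatorial rectangles. By Definition~\ref{def:rect}, a rectangle in an $m \times m$ grid is specified by a subset $A$ of the $m$ rows and a subset $B$ of the $m$ columns, so there are at most $2^m \cdot 2^m = 4^m$ of them. A single routing step (Definition~\ref{def:2DRectSwaps}) is determined by an ordered pair of disjoint, equal-dimension rectangles $(R_1,R_2)$ — and the identity is the degenerate case $R_1=R_2=\emptyset$ — so the number of single-step permutations obeys $k \le 4^m \cdot 4^m = 16^m$. (Imposing the disjointness and equal-dimension constraints, and the fact that $(R_1,R_2)$ and $(R_2,R_1)$ induce the same swap, only decreases $k$, so this crude bound is legitimate; one could tighten the constant via Vandermonde's identity, grouping by common dimension $(a,b)$ to get $\sum_{a,b}\binom{m}{a}^2\binom{m}{b}^2 = \binom{2m}{m}^2 \le 16^m$, but that is not needed.) The upshot is $\log k \le m\log 16 = \sqrt{N}\,\log 16$, since $m = \sqrt N$.

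Then I would substitute into Theorem~\ref{thm:counting}: with $N$ qubits and $k \le 16^{\sqrt N}$ single steps,
\[
    \Rt(N) \;\ge\; \frac{N \log(N/e)}{\log k} - 1 \;\ge\; \frac{N \log(N/e)}{\sqrt{N}\,\log 16} - 1 \;=\; \frac{\sqrt{N}\,\log(N/e)}{\log 16} - 1,
\]
which is $\Omega(\sqrt{N}\log N)$, as claimed.

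I do not expect a genuine obstacle here: once Theorem~\ref{thm:counting} is available, the only point requiring care is that a grid-transfer move is described by $\Theta(\sqrt N)$ bits (the two row-subsets and two column-subsets), whereas a generic permutation of $N$ objects carries $\Theta(N\log N)$ bits, so the entropy/counting method forces $\Omega(\sqrt N\log N)$ steps and cannot be improved. The mild subtlety is simply to ensure that the quantity $16^m$ is a valid \emph{upper} bound on $k$ (it is, by the argument above), since Theorem~\ref{thm:counting} needs only an upper bound on the branching factor.
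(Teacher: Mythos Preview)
Your proof is correct and follows essentially the same approach as the paper: both apply Theorem~\ref{thm:counting} after bounding the number of single-step permutations by the number of ordered pairs of rectangles in an $m\times m$ grid. The paper phrases the count via the Vandermonde-type sum $\sum_{i,j}\binom{m}{i}^2\binom{m}{j}^2=\binom{2m}{m}^2\le(2e)^{4m}$ (your parenthetical remark), whereas you use the cruder but sharper $k\le(4^m)^2=16^m$; either way one gets $\log k=\Theta(\sqrt N)$ and hence $\Rt(N)=\Omega(\sqrt N\log N)$.
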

\begin{proof}
    When routing with grid transfers, each single step is a swap between two disjoint rectangles. The number of pairs of rectangles in a $m \times m$ grid upper bounds the number of swap steps, so $k \leq \sum_{i,j} \left( \binom{m}{i} \binom{m}{j} \right)^2 =  \binom{2m}{m} \leq (2e)^{4m}$, as each rectangle is specified by a subset of $m$ rows and columns. Applying Theorem~\ref{thm:counting}, we find $\Rt(N) \geq \frac{\sqrt N (\log N - 1)}{4 (1 + \log 2)} - 1$.
\end{proof}

\begin{theorem}
    For 2D routing with selective transfers, $\Rt(N) = \Omega(\log N)$.
    \label{thm:counting_2dselective}
\end{theorem}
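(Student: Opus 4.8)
The plan is to apply Theorem~\ref{thm:counting} to the selective-transfers model. Once we have an upper bound $k$ on the number of distinct single-step permutations with $\log k = O(N)$, the theorem directly gives $\Rt(N) \ge \frac{N \log(N/e)}{\log k} - 1 = \Omega(\log N)$, which is the claim. So the only real work is to overcount the selective-transfers steps.

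By Definition~\ref{def:2DArbPickStep}, a single step is determined by a triple $(R_1, R_2, \mathcal M)$, where $R_1, R_2$ are equal-dimension combinatorial rectangles and $\mathcal M\colon \mathbb Z \to \{0,1\}$ is a masking function. We may assume without loss of generality that the active parts of $R_1$ and $R_2$ lie inside $S^{2D}$, since a point of a rectangle outside $S^{2D}$ (or one whose mask bit is $0$) does not affect the induced permutation of $S^{2D}$. Because $|S^{2D}| = N$, the atoms occupy at most $N$ distinct row coordinates and at most $N$ distinct column coordinates; by Definition~\ref{def:rect} each rectangle is then specified by a subset of these rows together with a subset of these columns, giving at most $2^N \cdot 2^N = 4^N$ rectangles, hence at most $4^{2N}$ ordered pairs $(R_1, R_2)$. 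The mask influences $\sigma$ only through its values on the indices $1, \dots, |R_1|$, and $|R_1| \le N$, so at most $2^N$ masks are distinguishable. Altogether $k \le 4^{2N} \cdot 2^N = 2^{5N}$, i.e. $\log k \le 5N \log 2$. Substituting into Theorem~\ref{thm:counting} yields $\Rt(N) \ge \frac{N \log(N/e)}{5N\log 2} - 1 = \frac{\log(N/e)}{5\log 2} - 1 = \Omega(\log N)$.

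The one place to be careful is the counting of rectangles and masks: the bound on $k$ must be \emph{exponential} in $N$ (so that $\log k = O(N)$), not merely $N^{O(N)}$, since a denominator of $\log k = \Theta(N \log N)$ would collapse the $N\log(N/e)$ numerator to a constant rather than $\Omega(\log N)$. This linearity is guaranteed here precisely because $N$ atoms span only $O(N)$ rows and $O(N)$ columns, so each rectangle — and the mask, which has at most $N$ relevant bits — is describable in $O(N)$ bits. Everything else is a direct substitution into Theorem~\ref{thm:counting}, so I do not anticipate a genuine obstacle; the contrast with the grid-transfers case (Theorem~\ref{thm:counting_2DRect}) is simply that there the per-step description cost is the same $O(m) = O(\sqrt N)$ as there, whereas the masking freedom here raises it to $O(N)$, weakening the lower bound from $\Omega(\sqrt N \log N)$ to $\Omega(\log N)$.
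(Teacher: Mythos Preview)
Your proof is correct and follows the same overall strategy as the paper: apply Theorem~\ref{thm:counting} after establishing $\log k = O(N)$. The paper's counting of $k$ is slightly slicker, though: rather than bounding the number of triples $(R_1,R_2,\mathcal M)$ via your WLOG reduction to atom-rows and atom-columns, it observes that the \emph{effect} of any selective-transfers step is simply a swap of two equal-sized subsets of the $N$ atoms, so the number of distinct single-step permutations is at most $\sum_i \binom{N}{i}^2 = \binom{2N}{N} \le (2e)^{2N}$, exactly the bound used for 1D in-order swaps. This sidesteps the need to argue that rectangle rows/columns outside the occupied coordinates can be discarded, and yields a marginally better constant ($\tfrac{1}{2(1+\log 2)}$ versus your $\tfrac{1}{5\log 2}$), but both routes give the same $\Omega(\log N)$ conclusion.
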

\begin{proof}
    Each single step with selective transfers is specified by two combinatorial rectangles $R_1$ and $R_2$ and a masking function $\mathcal M \colon \mathbb{Z} \to \{0, 1\}$. Since the step results in a swap of two subsets of the array, the number of single steps $k$ is upper bounded by the number of equal sized subsets of the array, just as in 1D routing with in-order swaps. Thus, $k \leq (2e)^{2N}$ and
    $\Rt{(N)} \geq \frac{\log N - 1}{2 (1 + \log 2)} - 1$.
    %
\end{proof}

%

In two dimensions, the swap operation with selective transfers retains the basic fact that there are two sets of atoms that are picked up and remain in order with respect to each other in each set. This suggests that our lower bound in one dimension applies to two dimensions as well, as we now demonstrate.

Consider implementing a reversal on a subset of a single row of atoms of size $n$ in an infinite two-dimensional grid of empty static traps. Since we are in 2D, the set of qubits is some set $S^{2D} \subseteq \mathbb{Z} \times \mathbb{Z}$. Once again we may define a reversal monotone on this array of the largest subset of atoms that are in unique columns, and are being routed to in reverse order to another set of unique columns. We still let $\sigma$ denote the target permutation, implemented by steps $\sigma = \sigma_k \sigma_{k-1} \dots \sigma_1$, and denote the remaining permutation at each step by $\tau_i = \prod_{j=k}^{i} \sigma_j$.
%

%

%
%
%
%
%
%
%

%

%

%
\begin{definition}[2D Reversal Monotone]
    For qubits in a 2D array with coordinates given by $S^{2D} \subseteq \mathbb{Z} \times \mathbb{Z}$, and a permutation $\sigma\colon S^{2D} \to S^{2D}$, the 2D reversal monotone $\mathcal R^{2D}(\sigma)$ is the largest subset $x \subseteq S^{2D}$ that satisfies the following:
    \begin{equation}
        \forall i,j \in x, i \neq j \implies \col(i) \neq \col(j) \textrm{\ and\ } \col(i) < \col(j) \implies \col(\sigma(i)) > \col(\sigma(j))
    \end{equation}

    %
    %
    %
    %
    %
    %
    %
    %
\end{definition}

Here $\col$ indicates the column of a coordinate tuple, i.e., $\col((i,j)) = j$. We can now see clearly that the same proof steps restrict this monotone to decrease only by a factor of $\frac{1}{3}$ at each step of any routing sequence.

\begin{theorem}
    $|\mathcal R^{2D}(\tau_{i+1})| \geq \frac{1}{3} |\mathcal R^{2D}(\tau_i)|$ for 2D routing with selective transfers.
\end{theorem}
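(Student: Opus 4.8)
The plan is to mimic exactly the two-case argument used in the 1D in-order swap theorem, but now tracking columns rather than positions on a line. The key structural fact I would invoke is that a single selective-transfers step, applied to the atoms, still picks up two equal-dimension rectangles $R_1$ and $R_2$ and swaps some masked subset of the paired points; crucially, within each rectangle the lexicographic ordering is preserved, and in particular the \emph{column} order of atoms drawn from a single rectangle is non-decreasing. So an in-order swap step, restricted to any fixed set of atoms occupying distinct columns, behaves like a 1D in-order swap on those columns — no two atoms cross in column order unless one of them was actually moved by the step.

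First I would set up notation: write $x = \mathcal R^{2D}(\tau_i)$, so $x$ is a set of atoms in distinct columns whose columns are reversed under $\tau_i$, i.e. $\col(a) < \col(b) \implies \col(\tau_i(a)) > \col(\tau_i(b))$. Since $\tau_i = \tau_{i+1}\sigma_i$ with $\sigma_i$ a single selective-transfers step, I would partition $x$ according to how $\sigma_i$ acts on its elements: let $j$ be the number of atoms of $x$ that are actually moved by $\sigma_i$ (those lying at a masked position of $R_1$ or $R_2$), among which $j'$ are swapped with another element of $x$ and $j - j'$ with an element of $S^{2D}\setminus x$; the remaining $|x| - j - j'$... — actually, to match the paper's bookkeeping, I would say $j$ elements of $x$ are swapped, with $j' \le j$ of their partners also in $x$, so that the set of atoms of $x$ untouched by $\sigma_i$ has size $|x| - j - j'$ (double-counting the $j'$ internal pairs appropriately, exactly as in Theorem for in-order swaps).

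Then the two cases: if $j \ge \tfrac13 |x|$, the $j$ moved atoms all came from one of the two rectangles (or: the moved atoms within $R_1$ retain their column order, and likewise within $R_2$), and I would argue that the image under $\sigma_i$ of a reversal-ordered subset of $x$ lying in a single rectangle is still a reversal in columns after $\tau_{i+1}$ acts, because $\sigma_i$ preserves column order on each rectangle while $\tau_i$ reversed them — giving $|\mathcal R^{2D}(\tau_{i+1})| \ge j \ge \tfrac13|x|$. If instead $j < \tfrac13|x|$, then $k = |x| - j - j' \ge \tfrac13 |x|$ atoms of $x$ are fixed by $\sigma_i$, so their columns are unchanged, so they remain a column-reversal for $\tau_{i+1} = \tau_i \sigma_i^{-1}$ on these points, giving $|\mathcal R^{2D}(\tau_{i+1})| \ge k \ge \tfrac13|x|$. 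Either way the bound holds.

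The main obstacle — and the only place the 2D setting is genuinely different from 1D — is Case 1: justifying that the moved atoms still form a column-reversal after the step. The subtlety is that the $j$ moved atoms of $x$ may be split between $R_1$ and $R_2$, and even a single rectangle's points, once deposited, occupy the columns of the \emph{other} rectangle, so I must check that the column order of the images is consistent. The clean way to handle this is to observe that it suffices to take the moved atoms lying in whichever of $R_1, R_2$ contains at least half of them (losing at worst another factor, but the paper's constant $\tfrac13$ already absorbs this slack exactly as the $j'$ term does in the 1D swap proof), and then use that the map $[R_1]_i \mapsto [R_2]_i$ is column-order-preserving on the masked indices because both rectangles list their points lexicographically. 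I would therefore spell out that lemma — selective-transfer swaps are column-order-monotone on each of the two rectangles — as the single substantive ingredient, and then the rest is the verbatim case analysis. I expect the write-up to be three or four sentences once that monotonicity observation is stated, since the paper explicitly says ``the same proof steps'' apply.
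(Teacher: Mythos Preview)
Your approach is essentially the paper's: a two-case split on how many atoms of the reversal set are moved, using that atoms moved from a single rectangle preserve their relative column order. The only difference is bookkeeping --- the paper sets the threshold at $j \ge \tfrac{2}{3}|\mathcal R^{2D}(\tau_i)|$ for the \emph{total} number of moved atoms and then takes one rectangle's share ($\ge j/2 \ge \tfrac13|\mathcal R^{2D}(\tau_i)|$), which cleanly resolves the ``split between $R_1$ and $R_2$'' obstacle you flagged without needing the separate $j,j'$ accounting borrowed from the 1D proof.
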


\begin{proof}

\textbf{Case 1:} In a single step, $j \geq 2|\mathcal R^{2D}(\sigma)|/3$ qubits in $\mathcal R^{2D}(\sigma)$ are swapped. This must be done by masking the swap of two combinatorial rectangles, such that two sets of atoms $A$ and $B$ are swapped. Both $A$ and $B$ have $j/2$ qubits, and the relative ordering of the qubit's columns within both $A$ and $B$ must remain the same, so these $j/2$ qubits still form a 2D reversal. Therefore, $|R^{2D}(\tau^{i+1})| \geq j/2 \geq \frac{1}{3} |R^{2D}(\tau_i)|$.

\textbf{Case 2: } Otherwise, $j < 2 |\mathcal R^{2D}(\sigma)|/3$ qubits in $\mathcal R^{2D}(\sigma)$ are involved in the swap. These $j$ qubits form a reversal, so $|\mathcal R^{2D}(\tau^{i+1})| \geq j \geq \frac{1}{3} |\mathcal R^{2D}(\tau_i)|$.
\end{proof}

%

Therefore, when routing with selective transfers, implementing the single reversal of a row of $m$ atoms has the recurrence $T^{2D}(m) \geq 1 + T^{2D}(m/3)$, regardless of how many rows are in the grid. Therefore $\Rt(N) \geq \log_3 (\sqrt N)$. Asymptotically, this lower bound improves upon the bound in Theorem~\ref{thm:counting_2dselective} by a factor of $\frac{1 + \log 2}{\log 3} \approx 1.54$. %

%

\subsection{Sparse Routing with grid transfers}\label{sec:sparse}

In this subsection, we discuss sparse routing. While the task of general routing with grid transfers takes $\Omega(\sqrt N \log N)$ steps for most permutations, this model turns out to be well suited to the task of sparse routing, where only a small number of qubits are permuted. We formalize this in terms of the number of qubits per row and column that will be routed.

\begin{definition}[Column-sparse and row-sparse permutations]\label{def:sparse}
    A column-sparse or row-sparse permutation $\sigma\colon S \to S$ over $S = \{1,2, \dots, N\}$ is a permutation that is nonidentity for $\bigo{\poly(\log N)}$ qubits per column or row, respectively. %
\end{definition}

We show that any column-sparse or row-sparse permutation can be implemented with $O(\poly(\log N))$ grid transfer steps via the following procedure.

\begin{theorem}\label{thm:sparseRting}
    For a column-sparse or row-sparse permutation $\sigma$, $\Rt(N, \sigma) = \bigo{\poly(\log N)}$ with grid transfers. %
    %
\end{theorem}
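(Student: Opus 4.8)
The plan is to adapt the classical three-phase routing scheme for meshes to the grid-transfer model, using the sparsity of $\sigma$ to keep each phase short. Write $N=m^2$ and place the qubits on the $m\times m$ grid. Given the target permutation $\sigma$, I would route in three phases: \textbf{(I)} a \emph{column permutation} --- a permutation acting independently within each column --- chosen so that afterward every row holds exactly one qubit destined for each target column; \textbf{(II)} a \emph{row permutation} sending each qubit to its target column; and \textbf{(III)} a column permutation sending each qubit to its target row. The existence of a suitable Phase~I is the standard consequence of K\"onig's edge-coloring theorem applied to the $m$-regular bipartite multigraph that has an edge from each qubit's source column to its target column: its $m$ color classes are perfect matchings $M_1,\dots,M_m$, and in Phase~I one sends, within each source column $c$, the qubit destined for column $M_j(c)$ to row $j$. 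Phases~I and~III are realized by ``vertical'' grid transfers (two rectangles sharing their column sets) and Phase~II by ``horizontal'' ones; in particular Phase~II is a transposed instance of a column permutation.

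For a column-sparse $\sigma$ I would argue each of the three phases can be taken to be an \emph{$s$-sparse per-line permutation}: a permutation acting within each line (column or row) whose support in any single line has size $s=\bigo{\poly(\log N)}$. A qubit fixed by $\sigma$ contributes a self-loop $c\to c$ to the multigraph, and each column carries at most $p=\bigo{\poly(\log N)}$ non-self-loop edges; a flow/Hall argument shows the edge-coloring can be chosen so that, in each column, every fixed qubit keeps its row, so Phases~I and~III move only the $\le p$ moved qubits per column plus those they displace. Choosing within each color class which qubit of a column plays which role so as to spread the (at most $p$) qubits landing in each target column over distinct rows makes Phase~II have support $\le p$ per row. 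Since transposing the grid interchanges the two orientations and the row-sparse case, it then suffices to prove the \textbf{Key Lemma}: an $s$-sparse per-line permutation can be implemented in $\bigo{\poly(\log N)}$ grid-transfer steps.

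I would prove the Key Lemma in two parts. \textbf{Part~A} (the case that works cleanly): if the support lies in a \emph{common} band of $s$ rows across all columns, realize each column's permutation of those $s$ rows by a single depth-$\bigo{\log s}$ permutation network on $s$ wires, and execute it layer by layer; each layer consists of $\le s/2$ switches at \emph{fixed} pairs of rows, and switch $(i,i')$ is one grid transfer that swaps rows $i$ and $i'$ across exactly the (combinatorial-rectangle) set of columns whose switch is set to ``cross,'' for $\bigo{s\log s}=\bigo{\poly(\log N)}$ transfers in total. \textbf{Part~B}: reduce the general case to Part~A by a \emph{vertical compaction} that, within each column, moves its $\le s$ support rows to the top $s$ rows in order; then Part~A applies, after which the compaction is undone.

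I expect \textbf{Part~B} to be the main obstacle. Compaction is itself a per-line permutation that is not confined to a common band, so Part~A does not apply directly, and a naive halving recursion on the row coordinate doubles the number of independent single-column subproblems per level, costing $\bigomega{m}$ transfers; the obstruction to merging subproblems into shared grid transfers is exactly that the rows \emph{and} the columns they act on both vary across subproblems. My plan is to perform the compaction by a radix recursion on the row index in which each level's reshaping step has only $\bigo{\poly(s)}$ distinct ``local behaviors'' per column --- determined by how many support rows lie in each of a bounded number of row-buckets, i.e.\ a vector in $\{0,\dots,s\}^{\bigo{1}}$ --- so that columns sharing a behavior are reshaped together by $\bigo{\poly(s)}$ shared grid transfers, over $\bigo{\log m}$ levels; showing that this batching yields genuine combinatorial rectangles at every level is the delicate point. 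Granting the Key Lemma, the three phases compose to the claimed $\bigo{\poly(\log N)}$ bound, and the row-sparse case follows by transposition.
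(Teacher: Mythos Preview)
Your approach is genuinely different from the paper's. The paper does not attempt a global three-phase (column/row/column) decomposition. Instead it works in rounds: greedily select source--target pairs $q,\sigma(q)$ whose columns are pairwise distinct (each selection rules out at most $\bigo{\poly(\log m)}$ competing pairs, so a $1/\poly(\log m)$ fraction of all moved qubits is selected per round); compress the selected sources into row~$1$ and the selected targets into row~$2$ by a $\lceil\log_2 m\rceil$-step ``column fold'' (at each level, swap the upper and lower row-halves of exactly those columns whose marked qubit sits in the upper half---this \emph{is} a single rectangle swap because every column uses the same two row-blocks); do one 1D routing pass on row~$1$ and one row swap; then un-compress. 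After $\bigo{\poly(\log m)}$ rounds all moved qubits are in place.

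Your plan has two real gaps. First, the claim that the edge-coloring can always be chosen so that every fixed qubit keeps its row is false. Take $m=3$ with the single $3$-cycle $(2,0)\mapsto(1,1)\mapsto(0,2)\mapsto(2,0)$ and the other six qubits fixed: pinning the six self-edges to their desired colors forces the three non-self edges to receive colors $2,1,0$ at their source vertices, and then two edges incident to target column~$0$ both carry color~$0$. You may still be able to argue the weaker statement that only $\bigo{p}$ qubits per column move in Phases~I and~III, but that needs a different argument than the one you sketch. Second---and this is where the weight of the proof sits---Part~B of your Key Lemma is not established. Batching columns by their bucket-count vector does not yield rectangles: two columns with identical counts in each row-bucket will generically have their support in \emph{different} rows, so no single rectangle swap reshapes both. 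You flag this yourself as ``the delicate point,'' but the sketch does not resolve it. The paper's compression succeeds precisely because it restricts to \emph{one} marked qubit per column, so at every halving level all participating columns share the same two row-blocks and only the column set varies---exactly what makes each fold a single grid transfer. A plausible salvage of your Part~B would be to iterate that one-qubit-per-column fold $s$ times, compressing to rows $1,\dots,s$ in turn with each round confined to the rows below those already filled; but as written the reduction to Part~A is incomplete.
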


\begin{proof}
    %
    In the targeted sparse permutation $\sigma$ on a grid of size $m \times m$, $N = m^2$, consider the set of qubits $A \subset S^{2D} = \mathbb{Z}^m \times \mathbb{Z}^m$ for which $\forall i \in A, \sigma(i) \neq i$. We assume $\sigma$ to be column-sparse in the sense that $A$ has $\bigo{\poly(\log N)}$ qubits per column. We describe a procedure to implement $\sigma$ in Algorithm \ref{alg:sparse}. The general outline of the procedure is to choose a set of qubits $i, \sigma(i) \in S^{2D}$ that come from a set of independent columns such that no more than a qubit $i$ and its destination $\sigma(i)$ are in the same column in the set. From there, a compression procedure packs these qubits into two adjacent rows, where 1D routing is performed, and then by reversing the compression procedure half of the selected qubits will have been routed to their destination. This process is repeated until $\sigma$ is realized.

    \begin{figure*}
    \begin{algorithm}[H]
        \caption{Column-sparse routing}
        \label{alg:sparse}
        \begin{algorithmic}
            
            \Require $\forall (i,j), (i^\prime, j^\prime) \in A, j = j^\prime \implies i = i^\prime$
            \Comment{Ensure qubits are in unique columns}
            \Require $\forall b \in B, \col{(b)} \geq k$ \Comment{Ensure qubits start in columns $\geq k$}
            \LComment{Compress the qubits in $A$ to row $k$}
            \Procedure{Compress}{$B \subseteq S^{2D}$, $m \in \mathbb{Z}$, $k \in \mathbb{Z}$}
                \If{m = 0}
                    \State \Return $\id$
                \EndIf
                \LComment{The set of columns with a qubit in B above the row midpoint of the array}
                \State $C \gets \{j \mid (i,j) \in A, i > \lfloor\frac{m}{2}\rfloor + k\}$ 
                \LComment{The two sets of rows, either above or below the midpoint}
                \State $R_1 \gets \{ k, \dots, \lfloor m/2 \rfloor + k \}$
                \State $R_2 \gets \{ \lfloor m/2 \rfloor + k + 1, \dots, 2 \lfloor m/2 \rfloor + k\}$
                \State $\sigma_1 \gets $ the permutation corresponding to a swap of rectangles $R_1 \times C$ and $R_2 \times C$
                \State Perform $\sigma_1$
                \State $B^\prime \gets \{\left(i, \left(j - k \bmod \lfloor \frac{m}{2} \rfloor\right) + k) \mid (i,j) \in B\right)\}$ \Comment{The locations of the qubits in B after the swap.}
                \State $\sigma_2 \gets$ \Call{Compress}{$B^\prime$, $\lfloor \frac{m}{2} \rfloor$, $k$}
                \State \Return $\sigma_2 \sigma_1$
            \EndProcedure

            \Require $\sigma$ is a column-sparse permutation as in definition \ref{def:sparse}
            \Procedure{SparseRoute}{$\sigma\colon S^{2D} \to S^{2D}$}
                \If{$\sigma = \id$}
                    \State \Return id
                \EndIf
                \State $A \gets \{ q \in S^{2D} \mathrm{\ s.t\ } q \neq \sigma(q)\}$
                \Comment{The set of qubits addressed non-trivially by $\sigma$}
                \State Mark all of the qubits in $A$ black
                \State $B_1 \gets \emptyset$
                \State $B_2 \gets \emptyset$
                \While{$\exists q \in A$ s.t. $q$ and $\sigma(q)$ are black}
                    \State $B_1 \gets B_1 \cup \{q\}$
                    \State $B_2 \gets B_2 \cup \{\sigma(q)\}$
                    \State $E \gets \{ q^\prime \in A \mathrm{\ s.t\ } \{\col(q^\prime), \col(\sigma(q^\prime))\} \cap \{\col(q), \col(\sigma(q))\} \neq \emptyset \} $
                    \State Mark the qubits in $E$ red
                \EndWhile
                \State $\sigma_1 \gets$ \Call{Compress}{$B_1$, $m$, $1$}
                \State $B_2^\prime \gets \{\sigma_1 b \mid b \in B_2\}$ \Comment{Keep track of how the qubits in $B_2$ move during the compression of $B_1$}
                \State $\sigma_2 \gets$ \Call{Compress}{$B_2^\prime$, $m - 1$, $2$} 
                %
                \State Perform 1D routing on the first row so that all the original pairs of $q \in B_1, \sigma(q) \in B_2$ are in the same column, and store this permutation as $\sigma_3$
                \State Swap the first and second row, and store this permutation as $\sigma_4$
                \State Perform the inverse routing schedules $\sigma_2^{-1} \sigma_1^{-1}$
                \LComment{At this point, the qubits $B_1$ will have been routed to their destination}
                \State $\sigma^\prime \gets \sigma_1 \sigma_2 \sigma_4^{-1} \sigma_3^{-1} \sigma_2^{-1} \sigma_1^{-1} \sigma$ \Comment{The remaining permutation to be performed}
                \State \Call{SparseRoute}{$\sigma^\prime$}
            \EndProcedure
        \end{algorithmic}
    \end{algorithm}
    \end{figure*}

    The sparse routing protocol first marks all of the qubits in $A$ black, indicating that they have not been eliminated yet. It then iteratively chooses any pair of black qubits $q \in A$ and $\sigma(q)$, and adds them to sets $B_1$ and $B_2$, respectively. Next, it eliminates any qubits that might conflict with the routing of $q$ to $\sigma(q)$, marking red any qubit $q^\prime$ and its destination $\sigma(q^\prime)$ where either $q^\prime$ or $\sigma(q^\prime)$ share a column with $q$ or $\sigma(q)$. Once the protocol is no longer able to select any more pairs of black qubits $q,\sigma(q)$, it compresses the qubits in $B_1$ into a single row, as well as those in $B_2$ to an adjacent row. After this, 1D routing is performed on the row corresponding to $B_1$, bringing together pairs $q,\sigma(q)$ in the same column. Finally, the two rows are swapped, and compression is reversed, thus swapping qubits $q \in B_1$ with their destination $\sigma(q) \in B_2$. The algorithm recurses on the remaining permutation to be performed, which is still guaranteed to meet the sparseness requirements. Though the compression procedure scrambles the qubits outside of $B_1$ and $B_2$ within their columns, as we reverse it at the end of the round no qubits other than $B_1$ and $B_2$ end up moving.

    To see that this algorithm implements $\sigma$ in $\bigo{\poly(\log N)}$ steps, we show that the compression procedure generates $\bigo{\log m}$ steps at most, and that the routing procedure selects sufficiently large sets $B_1$ and $B_2$ %
    per round such that only $\bigo{\poly(\log N)}$ rounds must be performed. For each compression procedure, compression of a set of qubits in independent columns is performed by selecting those columns containing qubits in rows $> \lfloor m/2 \rfloor$, and swapping the top and bottom halves of the column. This brings all of the qubits into rows $\leq \lfloor m/2 \rfloor$, and then this procedure recurses, folding selected columns in quarters, etc., until all of the qubits are in the first row. This clearly takes at most $\log_2 m$ steps.

    To justify the size of the routing procedure's selection of qubits, observe that each time a pair of qubits $q, \sigma(q)$ are added to $B_1$ and $B_2$, at most $\bigo{\poly(\log m)}$ qubits are marked red. This is because each column can only contain $\poly(\log m)$ elements affected by the permutation by definition, so when the two columns $\col(q)$ and $\col(\sigma(q))$ are eliminated, there can only be $\poly(\log m)$ pairs of qubits being routed into or out of these columns. As there are  $k = O(m \poly(\log m)$ elements in the permutation, the procedure routes $k / \bigo{\poly(\log m)}$ elements into place per round, thus requiring $\frac{k}{k / \bigo{\poly(\log m)}} = \bigo{\poly(\log m)}$ rounds total.

    Overall, each round uses at most $2 \log_2 m$ steps to perform two compressions and $\log_2 m$ steps of 1D routing, a single-step swap, and then at most $2\log_2 m$ more steps to reverse the two compressions. This makes the total steps generated $\bigo{\poly(\log m)} \left( 5\log_2 + 1 \right) = \bigo{\poly(\log m)}$. Thus, for any sparse permutation $\sigma$, $\Rt_\sigma = \bigo{\poly(\log m)} = \bigo{\poly(\log N)}$.
\end{proof}

%
\bibliographystyle{unsrtnat}
\bibliography{references.bib}
%

%
\makeatletter\@input{xx_main.tex}\makeatother